\DeclareMathOperator*{\argmin}{arg\,min}
\newtheorem{definition}{\bf Definition}
\newtheorem{theorem}{\bf Theorem}
\newtheorem{lemma}{\bf Lemma}
\title{\Large \bf Scalable Asymptotically-Optimal 
Multi-Robot Motion Planning}
\author{Andrew Dobson \and Kiril Solovey \and Rahul Shome \and Dan
  Halperin \and Kostas E. Bekris%
\thanks{A. Dobson, R. Shome and K. E. Bekris are with the Computer
  Science Dept. of Rutgers University, NJ, USA, {\tt \{chuples.dobson,
    rahul.shome, kostas.bekris\}@cs.rutgers.edu}. Their work is
  supported by NSF IIS 1451737 and NSF CCF 1330789.}%
\thanks{K. Solovey and D. Halperin are with the Computer Science
  Dept. of Tel Aviv University, Israel,  {\tt
    \{kirilsol,danha\}@post.tau.ac.il}. Their work has been supported in part
  by the Israel Science Foundation (grant no.~825/15) and by the
  Blavatnik Computer Science Research Fund. K. S. has also been
  supported by the Clore Israel Foundation.}%
}
\begin{document}

\maketitle

\setlength{\abovecaptionskip}{-2.5pt}
\setlength{\belowcaptionskip}{-6pt}
\setlength{\dbltextfloatsep}{2pt plus 1.0pt minus 1.0pt}
\setlength{\textfloatsep}{2pt plus 1.0pt minus 1.0pt}
\setlength{\intextsep}{2pt plus 1.0pt minus 1.0pt}
\setlength{\belowdisplayskip}{0pt} \setlength{\belowdisplayshortskip}{2pt}
\setlength{\abovedisplayskip}{0pt} \setlength{\abovedisplayshortskip}{2pt}

\newcommand{\mam}{$\mathcal{G}_{\tt MAM}$}
\newcommand{\pr}{\ensuremath{\mathbb{P}}}

\newcommand{\reals}{\mathbb{R}}
\newcommand{\integers}{\mathbb{Z}}

\newcommand{\Wspace}{\mathbb{W}}
\newcommand{\Sspace}{\mathscr{S}}
\newcommand{\Robots}{R}
\newcommand{\Manip}{\mathbb{A}}
\newcommand{\Obstacles}{\mathscr{Z}}
\newcommand{\obst}{\mathbb{Z}}
\newcommand{\cspace}{\ensuremath{\mathbb{C}}}
\newcommand{\ccross}{\ensuremath{\mathbb{C}}}
\newcommand{\ccrossfree}{\ensuremath{\mathbb{C}^{\textup{f}}}}
\newcommand{\cfree}{\ensuremath{\cspace^{\textup{f}}}}
\newcommand{\cinv}{\cspace^{\textup{o}}}
\newcommand{\cbound}{\cspace_{\cap}}
\newcommand{\cstable}{\cspace_{s}}
\newcommand{\cgrasp}{\cspace_{G}}

\newcommand{\oracle}{\mathbb{O}_d}

\newcommand{\qrand}{\ensuremath{Q^{\textup{rand}}}}
\newcommand{\vnear}{\ensuremath{V^{\textup{near}}}}
\newcommand{\vnew}{\ensuremath{V^{\textup{new}}}}
\newcommand{\vlast}{\ensuremath{V^{\textup{last}}}}
\newcommand{\vparent}{\ensuremath{V^{\textup{best}}}}

\newcommand{\Pspace}{\mathbb{P}}
\newcommand{\pose}{p}

\newcommand{\Qspace}{\mathbb{Q}}
\newcommand{\GeomManip}{\mathbb{WM}}

\newcommand{\rad}{\ensuremath{r(n)}}
\newcommand{\radstar}{\ensuremath{r^*(n)}}
\newcommand{\radi}{\ensuremath{r_i(n)}}
\newcommand{\radj}{\ensuremath{r_j(n)}}
\newcommand{\crossrad}{\ensuremath{r_R(n)}}
\newcommand{\crossradstar}{\ensuremath{r^*_R(n)}}
\newcommand{\impcrossrad}{\ensuremath{\hat r_R(n)}}
\newcommand{\allimpcrossrad}{\ensuremath{\hat r_{R}(n^R)}}
\newcommand{\ki}{\ensuremath{k_i(n)}}
\newcommand{\kj}{\ensuremath{k_j(n)}}

\newcommand{\mmgraph}{\ensuremath{\mathbb{G}}}
\newcommand{\mmgimp}{\hat\mmgraph}
\newcommand{\mmgexp}{\mmgraph}
\newcommand{\graph}{\ensuremath{\mathbb{G}}}
\newcommand{\aograph}{\ensuremath{\mathbb{G}^{AO}}}
\newcommand{\tree}{\ensuremath{\mathbb{T}}}
\newcommand{\mmnodes}{\mathbb{\hat V}}
\newcommand{\mmedges}{\mathbb{\hat E}}
\newcommand{\mmnodestpprm}{\mathbb{V}_{\chi_i}}
\newcommand{\mmedgestpprm}{\mathbb{E}_{\chi_i}}
\newcommand{\mmnode}{\mathbb{\hat v}}
\newcommand{\mmedge}{\mathbb{\hat e}}
\newcommand{\nodes}{\mathbb{V}}
\newcommand{\node}{\mathbb{v}}
\newcommand{\edges}{\mathbb{E}}
\newcommand{\edge}{\mathbb{e}}
\newcommand{\prmstar}{\ensuremath{ {\tt PRM^*} }}
\newcommand{\sprmstar}{Soft-\ensuremath{ {\tt PRM} }}
\newcommand{\irs}{\ensuremath{ {\tt IRS} }}
\newcommand{\spars}{{\tt SPARS}}
\newcommand{\drrt}{\ensuremath{{\tt dRRT}}}
\newcommand{\drrtstar}{\ensuremath{{\tt dRRT^*}}}

\newcommand{\sig}{{\tt SIG}}
\newcommand{\local}{\mathbb{L}}
\newcommand{\rmaps}{\ensuremath{\mathfrak{R}}}

\newcommand{\prm}{{\tt PRM}}
\newcommand{\mmprm}{\ensuremath{\text{Random-}{\tt MMP}}}
\newcommand{\kprmstar}{{\tt k-PRM$^*$}}
\newcommand{\rrt}{\ensuremath{{\tt RRT}}}
\newcommand{\rrtdrain}{{\tt RRT-Drain}}
\newcommand{\rrg}{{\tt RRG}}
\newcommand{\est}{{\tt EST}}
\newcommand{\rrtstar}{\ensuremath{\tt RRT^{\text *}}}
\newcommand{\astar}{{\ensuremath{\tt A^{\text *}}}}
\newcommand{\mstar}{{\tt M^{\text *}}}
\newcommand{\opens}{P_{Heap}}

\newcommand{\bvp}{{\tt BVP}}
\newcommand{\alg}{{\tt ALG}}
\newcommand{\fixed}{{\tt Fixed}-$\alpha$-\rdg}

\newcommand{\config}{C}

\newcommand{\cost}{\textup{cost}}

\newenvironment{myitem}{\begin{list}{$\bullet$}
{\setlength{\itemsep}{-0pt}
\setlength{\topsep}{0pt}
\setlength{\labelwidth}{0pt}
\setlength{\leftmargin}{10pt}
\setlength{\parsep}{-0pt}
\setlength{\itemsep}{0pt}
\setlength{\partopsep}{0pt}}}%
{\end{list}}

\newtheorem{claim}{\bf Claim}

\newcommand*{\qed}{\hfill\ensuremath{\square}}

\newcommand{\kiril}[1]{{\color{blue} \textbf{Kiril:} #1}}
\newcommand{\chups}[1]{{\color{red} \textbf{Chuples:} #1}}
\newcommand{\rahul}[1]{{\color{green} \textbf{Rahul:} #1}}

\newcommand{\T}{\mathcal{T}}

\begin{abstract}
Discovering high-quality paths for multi-robot problems can be
achieved, in principle, through asymptotically-optimal data structures
in the composite space of all robots, such as a sampling-based roadmap
or a tree. The hardness of motion planning, however, which depends
exponentially on the number of robots, renders the explicit
construction of such structures impractical. This work proposes a
scalable, sampling-based planner for coupled multi-robot problems that
provides desirable path-quality guarantees.  The proposed \drrtstar\
is an informed, asymptotically-optimal extension of a prior
method \drrt, which introduced the idea of building roadmaps for each
robot and implicitly searching the tensor product of these structures
in the composite space.  The paper describes the conditions for
convergence to optimal paths in multi-robot problems.  Moreover,
simulated experiments indicate \drrtstar\ converges to high-quality
paths and scales to higher numbers of robots where various
alternatives fail. It can also be used on high-dimensional challenges,
such as planning for robot manipulators.

\end{abstract}

\section{Introduction and Prior Work}

Many multi-robot planning applications \cite{Wagner:2015bd,
Gravot:2003kh, Gharbi:2009fu} require high-dimensional platforms to
simultaneously move in a shared workspace, where high-quality paths
must be computed quickly as sensing input is updated, as in
Fig. \ref{fig:setup}.  Preprocessing given knowledge of the static
scene can help the online computation of high-quality paths.
Sampling-based roadmaps can help with such high-dimensional challenges
and provide primitives for preprocessing a static
scene \cite{Kavraki1996Probabilistic-R, LaValle2001}.  These methods
converge to optimal solutions given sufficient
density \cite{Karaman2011Sampling-based-}, i.e., at least
$O(n\log(n))$ edges are needed for a roadmap with $n$ vertices, while
near-optimal solutions are achieved after finite computation
time \cite{Dobson:2015_Finite, Pavone:2015fmt}.




\begin{figure}[t]
\centering
\includegraphics[height=1.3in]{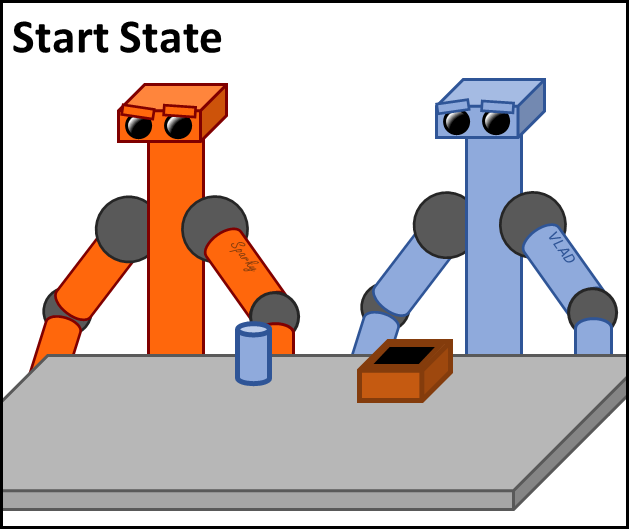}
\includegraphics[height=1.3in]{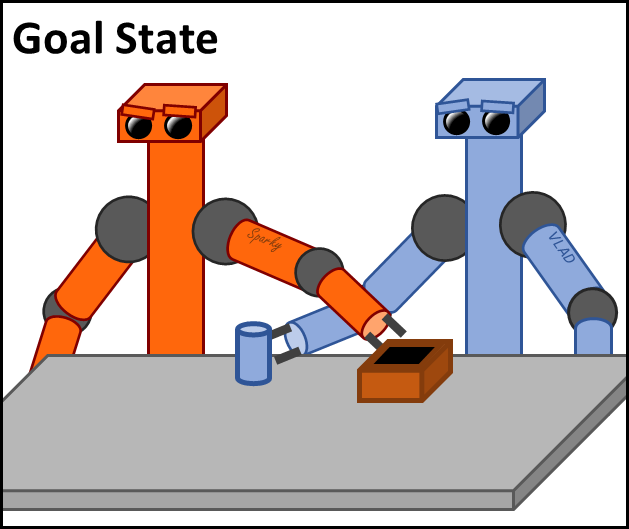}
\caption{Simultaneous planning for multiple high-dimensional systems is a
difficult, motivating challenge for this work.}
\label{fig:setup}
\end{figure}


Na\"ively constructing a sampling-based roadmap or tree in the robots'
composite configuration space provides asymptotic optimality but does
not scale well. In particular, memory requirements depend
exponentially on the problem's
dimension \cite{Schwartz1983Coordinated-Piano}.  The alternative is
decoupled planing, where paths for robots are computed independently
and then coordinated \cite{Leroy1999Multipath-Coordination}.  These
methods, however, typically lack completeness and optimality
guarantees. Hybrid approaches can achieve optimal decoupling to retain
guarantees
\cite{Berg:2009ve}.  The problem is more complex when the robots
exhibit non-trivial dynamics \cite{Peng2005Kinodynamic-Coord}.
Collision avoidance or control methods can scale to large numbers of
robots, but typically lack global path quality
guarantees \cite{vandenberg2011Reciprocal-Collision,
Tang2015Complete-Multi}.


The previously proposed \drrt\ approach \cite{SoloveySH16:ijrr} is a
scalable sampling-based approach, which is probabilistically complete.
It searches an implicit tensor product roadmap of roadmaps
constructued for each robot individually \cite{Svestka:1998ud}.  The
current work proposes \drrtstar and shows that it is an efficient
asymptotically optimal variant of the prior method.  Simulations show
the method practically generates high-quality paths while scaling to
complex, high-dimensional problems, where alternatives
fail. \footnote{Additional material is provided in the appendices of 
this paper as well as the accompanying video of the original MRS
submission.}


\section{Problem Setup and Notation}

Consider a shared workspace with $R \geq 2$ robots, each operating in
configuration space $\cspace_i$ for $1\leq i\leq R$.  Let $\cfree_i
\subset \cspace_i$ be each robot's free space, where it is free from
collision with the static scene, and $\cinv_i=\cspace_i\setminus
\cfree_i$ is the forbidden space for robot $i$. The composite 
$\cspace$-space $\cspace = \prod^R_{i=1} \cspace_i$ is the Cartesian 
product of each robot's $\cspace$-space.  A composite configuration 
$Q = (q_1,\ldots,q_R) \in \cspace$ is an $R$-tuple of robot 
configurations.  For two distinct robots $i, j$, denote by
$I_i^j(q_j)\subset \cspace_i$ the set of configurations where $i$ and
$j$ collide.  Then, the composite free space $\cfree \subset \cspace$
consists of configurations $Q=(q_1,\ldots,q_R)$ subject to:

\begin{myitem}
\item $q_i \in \cfree_i$ for every $1\leq i\leq R$;
\item $q_i \not\in I_i^j(q_j), q_j \not\in I_j^i(q_i)$ for every 
$1 \leq i < j\leq R$.
\end{myitem}

\noindent Each $Q \in \cfree$ requires robots to not collide with
obstacles, and each pair to not collide with each other. The composite
forbidden space is defined as $\cinv = \cspace \setminus \cfree$.

Given $S,T \in \cfree$, where $S=(s_1,\ldots,s_R),T=(t_1,\ldots,t_R)$,
a \emph{trajectory} $\Sigma:[0,1]\rightarrow \cfree$ is a
continuous curve in $\cfree$, such that $\Sigma(0)=S,\Sigma(1)=T$,
where the $R$ robots move simultaneously. $\Sigma$ is an $R$-tuple 
$(\sigma_1,\ldots,\sigma_R)$ of robot paths such that 
$\sigma_i:[0,1]\rightarrow \cfree_i$.

The objective is to find a trajectory which minimizes a \emph{cost
  function} $c(\cdot)$.  The analysis assumes the cost is the sum of
robot path lengths, i.e., $c(\Sigma)= \sum_{i=1}^R |\sigma_i|$, where
$|\sigma_i|$ denotes the standard \emph{arc length} of $\sigma_i$.
The arguments also work for $\max_{i=1:R} |\sigma_i|$.\footnote{The
  types of distances the arguments hold are more general, but proofs
  for alternative metrics are left as future work.}
Section~\ref{sec:analysis} shows sufficient conditions for
\drrtstar\ to converge to optimal trajectories over the cost function
$c$.


\section{Methods for Composite Space Planning}
For a fixed $n \in \mathbb{N}_+$, define for every robot $i$ the PRM
roadmap $\graph_i = (\nodes_i, \edges_i)$ constructed over $\cfree_i$,
such that $|\nodes_i|=n$ with connection radius \rad. Then, $\mmgimp 
= (\mmnodes, \mmedges) = \graph_1\times \ldots \times \graph_R$ is the 
\emph{tensor product roadmap} in space $\ccross$ (for an illustration,
see Figure~\ref{fig:tprm}).  Formally, $\mmnodes 
= \{ ( v_1, v_2, \dots, v_R ), \forall i, v_i \in \nodes_i\}$ is the 
Cartesian product of the nodes from each roadmap $\graph_i$.  For two 
vertices $V =(v_1,\ldots,v_m) \in \mmnodes, V'=(v'_1,\ldots,v'_m) \in 
\mmnodes$ the edge set $\mmedges$ contains edge $(V,V')$ if for every 
$i$ it is that $v_i=v'_i$ or $(v_i,v'_i)\in \edges_i$.\footnote{Notice this slight difference from
  $\drrt$~\cite{SoloveySH16:ijrr} so as to allow edges where some
  robots remain motionless while others move.}

\begin{figure}[t]
\centering
\includegraphics[height=1.5in]{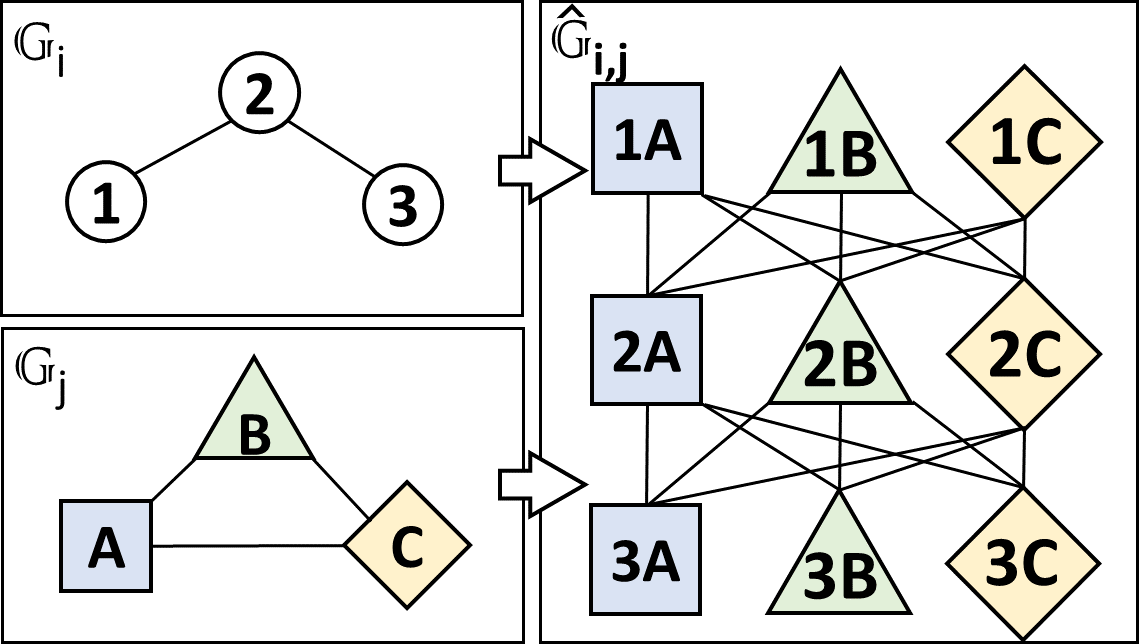}
\caption{An illustration of a two-robot tensor product roadmap $\hat
  \graph_{i,j}$ between roadmaps $\graph_i$ and $\graph_j$. Two nodes
  in the tensor-product roadmap share an edge if all the individual
  robot configurations share an edge in the individual robot
  roadmaps.}
\label{fig:tprm}
\end{figure}

As shown in Algorithm~\ref{algo:drrtstar}, $\drrtstar$ grows a tree
$\tree$ over $\mmgimp$, rooted at the start configuration $S$ and
initializes path $\pi_{\textup{best}}$ (line~1).  The method 
stores the node added each iteration $V$ (Line~2), as part of an 
informed process to guide the expansion of \tree\ towards the goal.  
The method iteratively expands $\tree$ given a time budget (Line~3), 
as detailed by Algorithm~\ref{algo:drrtstar_expand}, storing the newly 
added node $V$ (Line~4).  After expansion, the method traces the path 
which connects the source $S$ with the target $T$ (Line~5).  If such a 
path is found, it is stored in $\pi_{\textup{best}}$ if it improves 
upon the cost of the previous solution (Lines~6,~7).  Finally, the 
best path found $\pi_{\textup{best}}$ is returned (Line~8).

\begin{algorithm}[ht]
\caption{$\drrtstar {\tt(} \mmgimp, S, T {\tt)}$}
\label{algo:drrtstar}
$\pi_{\textup{best}} \gets \emptyset$,
$\tree.{\tt init}(S)$\;
$V \gets S$\;
\While{${\tt time.elapsed}() < {\tt time\_limit}$}
{
    $V \gets {\tt Expand\_\drrtstar(} \mmgimp, \tree, V, T {\tt)}$\;
    $\pi \gets {\tt Trace\_Path(} \tree, S, T {\tt)}$\;
    \If{$\pi \neq \emptyset$ and $cost(\pi) < cost(\pi_{\textup{best}})$ }
        {
            $\pi_{\textup{best}} \gets \pi$\;
        }
}
{\bf return $\pi_{\textnormal{best}}$}
\end{algorithm}


The expansion step is given in Alg. \ref{algo:drrtstar_expand}.  The
default initial step of the method is given in Lines~1-4, i.e., when
no $\vlast$ is passed (Line~1), which corresponds to an exploration
step similar to {\tt RRT}: a random sample \qrand\ is generated in
\ccross\ (Line~2), its nearest neighbor \vnear\ in $\tree$ is found
(Line~3) and the oracle function $\oracle(\cdot,\cdot)$ returns the
implicit graph node \vnew\ that is a neighbor of \vnear on the
implicit graph in the direction of $\qrand$ (Line~4). If a $\vlast$,
however, is provided (Line~5)---which happens when the last iteration
managed to generate a node closer to the goal relative to its
parent---then the $\vnew$ is greedily generated so as to be a
neighbor of $\vlast$ in the direction of the goal $T$ (Line~6).

\begin{algorithm}[ht]
\caption{${\tt Expand\_\drrtstar(} \mmgimp, \tree, \vlast, T {\tt)}$}
\label{algo:drrtstar_expand}
\If{$\vlast == NULL$}
{
    $\qrand \gets {\tt Random\_Sample()}$\;
    $\vnear \gets {\tt Nearest\_Neighbor(} \tree, \qrand {\tt)}$\;
    $\vnew \gets {\tt \oracle (} \vnear, \qrand {\tt)}$\;
}
\Else
{
    $\vnew \gets {\tt \oracle (} \vlast, T {\tt)}$\;
}
$N \gets {\tt Adjacent(} \vnew, \mmgimp {\tt )} \cap \nodes_{\tree}$\;
$\vparent \gets \argmin_{v \in N s.t. \local(v, \vnew) \subset \cfree } c(v) + c( \local(v, \vnew )) $\;
\If{$\vparent == NULL$}
{
    ${\bf return}\ NULL$\;
}
\If{ $\vnew \notin \tree$ }
{
    $\tree.{\tt Add\_Vertex(} \vnew {\tt)}$\;
    $\tree.{\tt Add\_Edge(} \vparent, \vnew {\tt)}$\;
}
\Else
{
    $\tree.{\tt Rewire(} \vparent, \vnew {\tt)}$\;    
}
\For{ $v \in N$ }
{
    \If{ $c( \vnew ) + c( \local( \vnew, v ) < c( v )$ and $\local(\vnew, v) \subset \cfree$ }
    {
        $\tree.{\tt Rewire(} \vnew, v {\tt )}$\;
    }
}
\If{ $h(\vnew) < h(\vparent)$ }
{
  ${\bf return}\ \vnew$\;
}
\Else
{
    ${\bf return}\ NULL$\;
}
\end{algorithm}


In either case, the method next finds neighbors $N$, which
are adjacent to \vnew\ in $\mmgimp$ and have also been added to
$\tree$ (Line~7).  Among $N$, the best node \vparent is chosen, for 
which the local path $\local(\vparent, \vnew)$ is collision-free and 
that the total path cost to $\vnew$ is minimized (Line~8).  If no such 
parent can be found (Line~9), the expansion fails and no node is 
returned (Line~10).  Then, if $\vnew$ is not in $\tree$, it is added
(Lines~11-13).  Otherwise, if it exists, the tree is rewired so as to 
contain edge $(\vparent, \vnew)$, and the cost of the $\vnew$'s
sub-tree (if any) is updated (Lines~14,~15).  Then, for all nodes in
$N$ (Line~16), the method tests $\tree$ should be rewired through 
$\vnew$ to reach this neighbor.  Given that $\local(\vnew, v)$ is 
collision-free and is of lower cost than the existing path to $v$ 
(Line~17), the tree is rewired to make $\vnew$ be the parent of $v$ 
(line~18).

Finally, if in this iteration the heuristic value of $\vnew$ is 
lower than its parent node $\vparent$ (line 19), the method returns 
$\vnew$ (Line~20), causing the next iteration to greedily expand
$\vnew$.  Otherwise, $NULL$ is returned so as to do an exploration 
step.  Note that the approach is implemented with helpful 
branch-and-bound pruning after an initial solution is found, though
this is not reflected in the algorithmics.

\vnew\ is determined via an oracle function.  Using this oracle
function and a simple rewiring scheme is sufficient for showing
asymptotic optimality for $\drrtstar$ (see 
Section~\ref{sec:analysis}).  The oracle function $\oracle$ for a
two-robot case is illustrated in Figure \ref{fig:oracle}.  First, let
$\rho(Q,Q')$ be the ray from configuration $Q$ terminating at $Q'$.
Then, denote $\angle_{Q} (Q',Q'')$ as the minimum angle between
$\rho(Q,Q')$ and $\rho(Q,Q'')$.  When \qrand\ is drawn in $\ccross$,
its nearest neighbor \vnear\ in $\tree$ is found. Then, project the
points \qrand\ and \vnear\ into each robot space $\cspace_i$, i.e.,
ignore the configurations of other robots.

\begin{figure}[ht]
\centering
\includegraphics[height=1.45in]{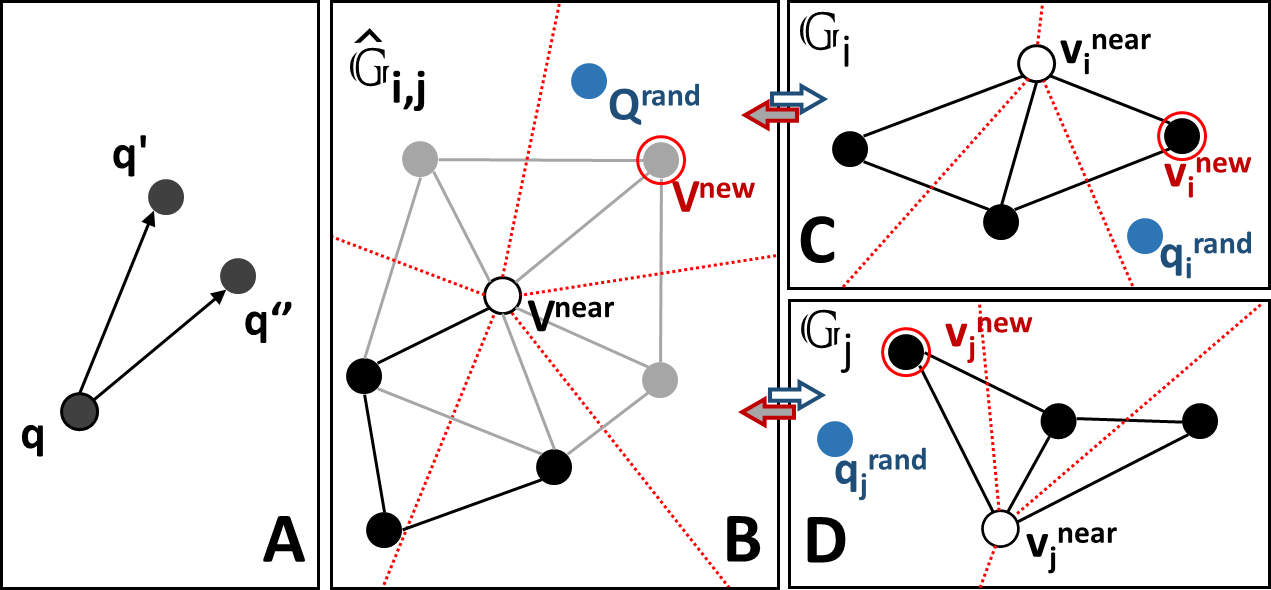}
\caption{(A) The method reasons over all neighbors $q'$ of $q$ so as
  to minimize the angle $\angle_{q}(q', q'')$. (B)
  $\oracle(\cdot,\cdot)$ finds graph vertex \vnew by minimizing angle
  $\angle_{\vnear}(\vnew,\qrand)$. (C,D) \vnear and \qrand are
  projected into each robot's $\cspace$-space so as to find nodes
  $v^{\textup{new}}_{\textup{i}}$ and $v^{\textup{new}}_{\textup{j}}$,
  respectively, which minimize angle $\angle_{
    v^{\textup{near}}_{\textup{i/j}}}
  (v^{\textup{new}}_{\textup{i/j}}, q^{\textup{rand}}_{\textup{i/j}}
  )$.}
\label{fig:oracle}
\end{figure}

The method separately searches the single-robot roadmaps to discover 
\vnew. Denote $\vnear= (v_1,\ldots,v_R), \qrand= (\tilde{q}_1,\ldots,
\tilde{q}_R)$.  For every robot $i$, let $N_i\subset \nodes_i$ be the 
neighborhood of $v_i \in \nodes_i$, and identify $v'_i = 
\argmin_{v \in N_i} \angle_{v_i} (q^{rand}_i, v)$.  The oracle 
function returns node $\vnew = (v'_1,\ldots,v'_R)$.

\begin{figure}[ht]
\centering
\includegraphics[height=1.1in]{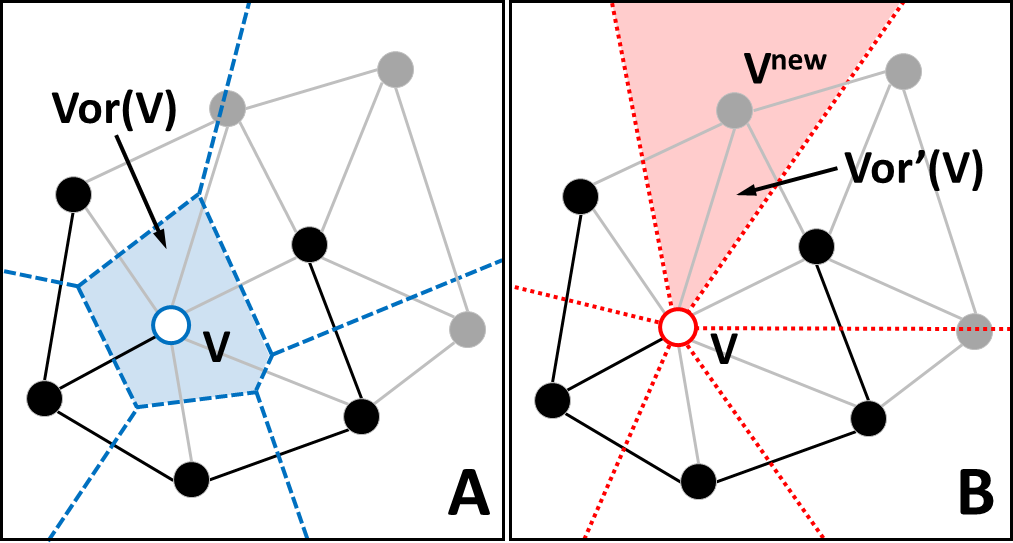}
\hspace{-0.2in}
\includegraphics[height=1.1in]{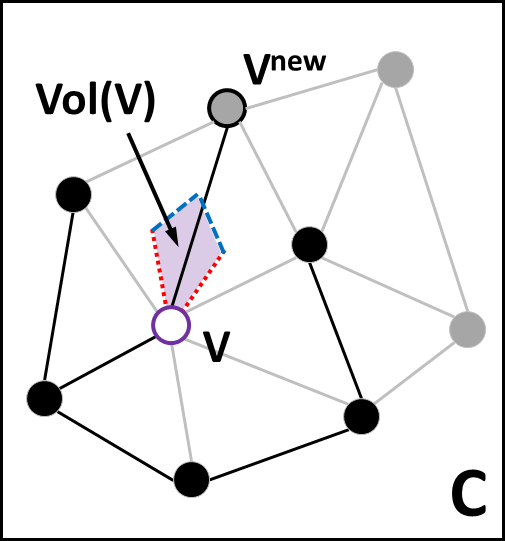}
\caption{(A) The Voronoi region $\textup{Vor}(V)$ 
of vertex $V$ is shown where if \qrand\ is drawn, vertex $V$ is 
selected for expansion. (B) When \qrand\ lies in the directional 
Voronoi region $\textup{Vor}'(V)$, the expand step expands to \vnew.  
(C) Thus, when \qrand\ is drawn within volume $\textup{Vol}(V) = 
\textup{Vor}(V) \cup \textup{Vor}'(V)$, the method will generate 
\vnew\ via $V$.}
\label{fig:voronoi}
\end{figure}

As in the standard $\rrt$ as well as in $\drrt$, the $\drrtstar$
approach has a Voronoi-bias property
~\cite{DBLP:conf/icra/LindemannL04}.  It is, however, slightly more
involved to observe as shown in Figure \ref{fig:voronoi}.  To generate
an edge $(V, V')$, random sample \qrand\ must be drawn within the
Voronoi cell of $V$, denoted $\textup{Vor}(V)$ (A) and in the general
direction of $V'$, denoted $\textup{Vor}'(V)$ (B).  The intersection
of these two volumes $\textup{Vol}(V) = \textup{Vor}(V) \cap
\textup{Vor}'(V)$ is the volume to be sampled generate \vnew\ via
\vnear.


\section{Analysis}
\label{sec:analysis}

In this section, the theoretical properties of \drrtstar\ are 
examined, beginning with a study of the asymptotic convergence of the 
implicit roadmap $\mmgimp$ to containing a path in $\cfree$ whose cost 
converges to the optimum.  Then, it is shown \drrtstar\ eventually 
discovers the shortest path in $\mmgimp$, and that the combination of 
these two facts proves the asymptotic optimality of \drrtstar.

For simplicity, the analysis is restricted to the setting of robots
operating in Euclidean space, i.e. $\cspace_i$ is a $d$-dimensional
Euclidean hypercube $[0,1]^d$ for fixed $d \geq 2$. \footnote{For 
simplicity, it is assumed that all the robots have the same number  
of degrees of freedom $d$.}  Additionally, the analysis is restricted 
to the specific cost function of \emph{total distance}, i.e., 
$|\Sigma|:=\sum_{i=1}^R|\sigma_i|$. Discussion on lifting these 
restrictions is provided in Section \ref{sec:discuss}.

\subsection{Optimal Convergence of $\mmgimp$}

For each robot, an asymptotically optimal \prmstar\ roadmap 
$\mmgraph_i$ is constructed having $n$ samples and using a connection
radius $r(n)$ necessary for asymptotic convergence to the optimum
\cite{Karaman2011Sampling-based-}.  By the nature of sampling-based
algorithms, each graph cannot converge to the true optimum with finite
computation, as such a solution may have clearance of exactly $0$.  
Instead, this work focuses on the notion of a robust optimum
\footnote{Note that the given definition of robust optimum is similar 
    to that in previous work~\cite{Solovey_2016_rss}.}, showing that 
the tensor product roamdap $\mmgimp$ converges to this value.

\begin{definition}
  A trajectory $\Sigma:[0,1]\rightarrow \cfree$ is \emph{robust} if
  there exists a fixed $\delta>0$ such that for every
  $\tau\in [0,1],X\in \cinv$ it holds that
  $\|\Sigma(\tau)-X\|_2\geq \delta$, where $\|\cdot\|_2$ denotes the
  standard Euclidean distance.
\end{definition}

\begin{definition}
  A value $c > 0$ which denotes a path cost is robust if for every
  fixed $\epsilon > 0$ there exists a robust path $\Sigma$ such that
  $|\Sigma| \leq (1+\epsilon)c$. The \emph{robust optimum},
  denoted by $c^*$, is the infimum over all such values.
\end{definition}

For any fixed $n\in \mathbb{N}^+$, and a specific instance of
$\mmgimp$ constructed from $R$ roadmaps, having $n$ samples each,
denote by $\Sigma^{(n)}$ the shortest path from $S$ to $T$ 
over~$\mmgimp$.

\begin{definition}
  $\mmgimp$ is asymptotically optimal (AO) if for every fixed
  $\epsilon > 0$ it holds that
  $|\Sigma^{(n)}| \leq (1+\epsilon)c^*$ a.a.s.\footnote{Let
      $A_1,A_2,\ldots$ be random variables in some probability space
      and let $B$ be an event depending on~$A_n$. We say that $B$
      occurs \emph{asymptotically almost surely} (a.a.s.) if
      $\lim_{n\rightarrow \infty}\Pr[B(A_n)]=1$.}, where the
    probability is over all the instantiations of $\mmgimp$ with $n$
    samples for each PRM.
\end{definition}

Using this definition, the following theorem is proven.  Recall that 
$d$ denotes the dimension of a single-robot configuration space. 
 
\begin{theorem}
$\mmgimp$ is AO when $$r(n)\geq \radstar=(1+\eta)2 \left(\frac{1}{d} 
\right)^{\frac{1}{d}} \left( \frac{\log n}{n} \right)^{\frac{1}{d}},$$
where $\eta$ is any constant larger than $0$.
\label{thm:opt_graph}
\end{theorem}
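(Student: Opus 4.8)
We need to show that the tensor product roadmap $\mmgimp = \graph_1 \times \cdots \times \graph_R$ (built from $R$ individual PRM* roadmaps, each with $n$ samples and connection radius $r(n)$) is asymptotically optimal — meaning the shortest path $\Sigma^{(n)}$ in $\mmgimp$ from $S$ to $T$ satisfies $|\Sigma^{(n)}| \leq (1+\epsilon)c^*$ a.a.s.

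The claimed radius is:
$$r(n) \geq r^*(n) = (1+\eta) \cdot 2 \cdot \left(\frac{1}{d}\right)^{1/d} \left(\frac{\log n}{n}\right)^{1/d}$$

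**Key observations about this radius:**

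Let me recall the standard PRM* radius from Karaman-Frazzoli. For a single robot in $[0,1]^d$, the standard AO connection radius is:
$$r_{\text{PRM*}}(n) = \gamma \left(\frac{\log n}{n}\right)^{1/d}$$
where $\gamma > 2(1 + 1/d)^{1/d} (\mu(\cfree)/\zeta_d)^{1/d}$ roughly, with $\zeta_d$ the volume of the unit ball.

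The radius here looks modified. Notice the factor $2(1/d)^{1/d}$. The critical insight must be: **we have $R$ robots, and we want the COMPOSITE path to be near-optimal**. The total cost is $\sum_i |\sigma_i|$. The robust optimum $c^*$ is a sum over $R$ robots.

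**The strategy I would use:**

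The core idea is to reduce the multi-robot problem to the single-robot AO results by decomposing the robust optimal trajectory into its $R$ components.

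First I would take the robust optimal composite trajectory. By definition of $c^*$, for any $\epsilon > 0$ there's a $\delta$-robust composite path $\Sigma = (\sigma_1, \ldots, \sigma_R)$ with $|\Sigma| = \sum_i |\sigma_i| \leq (1 + \epsilon/2)c^*$. Crucially, because $\Sigma$ has clearance $\delta$ from $\cinv$ (the COMPOSITE forbidden space), each component $\sigma_i$ has clearance at least $\delta$ from the individual obstacle space $\cinv_i$ — and moreover, a "tube" of width $\delta$ around the composite path stays collision-free (including robot-robot collisions).

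Second, I would invoke the single-robot AO theory on each roadmap $\graph_i$. The key technical tool in Karaman-Frazzoli / Solovey et al. is a covering/tiling argument: place a sequence of overlapping balls of radius proportional to $r(n)$ along $\sigma_i$, and show that a.a.s. each ball contains a sample, so we can connect consecutive samples by edges in $\graph_i$ (since their separation is at most $r(n)$). This produces a path $\tilde\sigma_i$ in $\graph_i$ whose length approximates $|\sigma_i|$ up to factor $(1 + O(\eta))$.

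**The main subtlety — why this radius:**

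Here's where the specific form of $r^*(n)$ matters, and what I expect to be the main obstacle. The tensor product edge $(V, V')$ exists if for EACH robot $i$, either $v_i = v_i'$ or $(v_i, v_i') \in \edges_i$. So to assemble a COMPOSITE path, I need the per-robot approximating paths $\tilde\sigma_i$ to be "synchronizable" — I must be able to move robots along edges of $\mmgimp$ while keeping the composite configuration collision-free.

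The probability analysis must control the union over all $R$ robots simultaneously, and the composite path must stay in $\cfree$ (not just each component in $\cfree_i$). The $\delta$-robustness gives the collision-free tube. The factor of $2$ and the $(1/d)^{1/d}$ term in the radius are tuned so that: (a) the per-robot ball-covering succeeds a.a.s. even after a union bound, and (b) the length distortion from snapping to roadmap vertices is controlled. I expect the $2(1/d)^{1/d}$ to arise from requiring that consecutive ball centers are within $r(n)$ AND that the balls have enough overlap to guarantee the covering probability $\to 1$ fast enough to survive the union bound over $\Theta(1/r(n))$ balls per robot and over $R$ robots.

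**The plan in order:**

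The proof I would write proceeds as follows:

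1. Fix $\epsilon > 0$; extract a $\delta$-robust composite trajectory $\Sigma$ with $|\Sigma| \leq (1 + \epsilon/3)c^*$ and clearance $\delta$.

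2. For each robot $i$, discretize $\sigma_i$ into segments of length $\sim r(n)$ and place balls of appropriate radius (a constant fraction of $r(n)$) centered along $\sigma_i$ inside the $\delta$-tube.

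3. Show via a Chernoff/union-bound argument that a.a.s. every such ball, for every robot, contains at least one PRM sample — this is where $r^*(n)$'s form (the $\log n / n$ scaling and the constant) is exactly what makes the failure probability summable and vanishing.

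4. Connect consecutive samples within each $\graph_i$ (possible since they lie within $r(n)$), yielding per-robot roadmap paths $\tilde\sigma_i$ with $|\tilde\sigma_i| \leq (1 + \eta')|\sigma_i|$.

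5. Synchronize the $R$ per-robot paths into a single path $\tilde\Sigma$ over $\mmgimp$: parametrize so that at each composite step at most the robots whose samples advance move along individual edges, and use $\delta$-robustness to certify each intermediate composite configuration and local path lies in $\cfree$. The tensor-product edge definition (allowing some robots to stay still) is precisely what permits this interleaving.

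6. Conclude $|\Sigma^{(n)}| \leq |\tilde\Sigma| = \sum_i |\tilde\sigma_i| \leq (1+\eta')|\Sigma| \leq (1+\epsilon)c^*$ a.a.s., by choosing $\eta'$ and the robustness parameters appropriately relative to $\eta$ and $\epsilon$.

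**Where I expect the real difficulty:**

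The hardest part will be Step 5 — the synchronization and collision-freeness of the assembled composite path. The per-robot AO results are essentially off-the-shelf, but proving that the composite local paths $\local(\cdot, \cdot)$ used along $\mmgimp$ edges remain in $\cfree$ (avoiding both static obstacles and robot-robot collisions) requires carefully exploiting the $\delta$-robustness of the ORIGINAL composite path and bounding how far the snapped roadmap configurations deviate from $\Sigma$. As long as each robot's snapped configuration stays within $\sim r(n) < \delta$ of the true $\sigma_i(\tau)$ at a synchronized time, the composite configuration stays within the clearance tube and hence in $\cfree$. Making the timing alignment rigorous — so that when we interpolate along a tensor-product edge no robot-pair collides mid-edge — is the delicate combinatorial-geometric core of the argument, and it is exactly why the robustness definition is stated in the composite space rather than per-robot.
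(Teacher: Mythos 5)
Your proposal is correct and follows essentially the same route as the paper: extract a $\delta$-robust trajectory realizing the robust optimum, obtain per-robot roadmap paths from single-robot \prmstar\ theory (the paper simply cites Theorem~4.1 of the cited FMT$^*$ work for your steps 2--4, including your key observation that each roadmap path stays within \radstar\ of the reference path), and then synchronize the $R$ paths into a composite path over $\mmgimp$ using the stay-still tensor edges, with composite robustness certifying that no robot-robot collisions arise --- exactly the paper's timestamp reparametrization, backed by its Claim~\ref{claim:robust} that composite clearance implies per-robot clearance against the time-parameterized forbidden space. The only cosmetic difference is your speculation that the constant $2(1/d)^{1/d}$ is tuned for a union bound over the $R$ robots; in fact it is just the single-robot radius inherited from the cited work ($R$ is fixed as $n\to\infty$, so no adjustment is needed).
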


{\bf Remark.} Note that \radstar\ was developed in
\cite[Theorem~4.1]{Pavone:2015fmt}, and guarantees AO of \prmstar for 
a single robot. The proof technique described in that work will be one
of the ingredients used to prove Theorem~\ref{thm:opt_graph}. 
\footnote{Note that \radstar\ can be refined to incorporate the 
proportion of $\cfree_i$, which would reduce this expression.} 

By definition of $c^*$, for any given $\epsilon>0$ there exists
robust trajectory $\Sigma:[0,1]\rightarrow \cfree$, and fixed
$\delta>0$, such that the cost of $\Sigma$ is at most 
$(1+1/2\cdot \epsilon)c^*$ and for every
$X\in \cinv, \tau\in [0,1]$ it holds that
$\|\Sigma(\tau)-X\|\geq \delta$.  Next, it is shown that $\mmgimp$
contains a trajectory $\Sigma^{(n)}$ such that
\begin{equation}
|\Sigma^{(n)}|\leq (1+o(1))\cdot |\Sigma|, \label{eq:eps_approx}
\end{equation}
a.a.s.. This immediately implies that
$|\Sigma^{(n)}| \leq (1+\epsilon)c^*$, which will finish the proof of
Theorem~\ref{thm:opt_graph}.

Thus, it remains to show that there exists a trajectory on $\mmgimp$ 
which satisfies Equation~\ref{eq:eps_approx} a.a.s..  As a first
step, it will be shown that the robustness of
$\Sigma = (\sigma_1,\ldots,\sigma_R)$ in the composite space implies
robustness in the single-robot setting, i.e., robustness along
$\sigma_i$.

For $\tau \in [0,1]$ define the forbidden space parameterized by
$\tau$ as \vspace{-0.1in} $$\cinv_i(\tau) = \cinv_i \cup 
\bigcup_{j=1, j \neq i}^R I_i^j( \sigma_j (\tau) ).$$ 

\begin{claim}
For every robot $i$, $\tau \in [0,1]$, and $q_i \in \cinv_i(\tau)$, 
$\| \sigma_i(\tau) - q_i \|_2 \geq \delta$.
\label{claim:robust}
\end{claim}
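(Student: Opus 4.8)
The plan is to lift a single-robot forbidden configuration back into the composite space and then invoke the composite robustness of $\Sigma$ directly. Fix a robot $i$, a time $\tau \in [0,1]$, and a configuration $q_i \in \cinv_i(\tau)$. I would form the composite configuration
$X = (\sigma_1(\tau),\ldots,\sigma_{i-1}(\tau),\, q_i,\, \sigma_{i+1}(\tau),\ldots,\sigma_R(\tau))$,
that is, the point that agrees with $\Sigma(\tau)$ in every coordinate except the $i$-th, where it takes the value $q_i$. The entire argument reduces to showing $X \in \cinv$, after which robustness does the rest.

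To establish $X \in \cinv$, I would split according to the two ways membership in $\cinv_i(\tau) = \cinv_i \cup \bigcup_{j \neq i} I_i^j(\sigma_j(\tau))$ can arise. If $q_i \in \cinv_i$, then the $i$-th component of $X$ already lies in robot $i$'s own forbidden space, so $X$ violates the first defining condition of $\cfree$ and hence $X \in \cinv$. Otherwise $q_i \in I_i^j(\sigma_j(\tau))$ for some $j \neq i$; since $j \neq i$ the $j$-th coordinate of $X$ is exactly $\sigma_j(\tau)$, so robots $i$ and $j$ are mutually in collision in $X$, violating the pairwise-collision condition, and again $X \in \cinv$.

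With $X \in \cinv$ in hand, robustness of $\Sigma$ (with its fixed $\delta > 0$) yields $\|\Sigma(\tau) - X\|_2 \geq \delta$. The final step exploits that $\Sigma(\tau)$ and $X$ differ only in coordinate $i$: under the product Euclidean metric on $\cspace = \prod_{i} \cspace_i$, every coordinate difference vanishes except the $i$-th, so $\|\Sigma(\tau) - X\|_2 = \|\sigma_i(\tau) - q_i\|_2$. Chaining the two relations gives $\|\sigma_i(\tau) - q_i\|_2 \geq \delta$, and since $i$, $\tau$, and $q_i$ were arbitrary, the claim follows.

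This is essentially a one-line reduction once the lifted configuration is constructed, so there is no serious obstacle. The only points I would scrutinize are the case analysis verifying $X \in \cinv$ (ensuring both defining conditions of $\cfree$ are correctly negated, and that keeping the other robots frozen at $\sigma_j(\tau)$ is exactly what exposes the collision), and the standard but load-bearing fact that $\|\cdot\|_2$ on the composite space is the product Euclidean metric, which is precisely what lets the single differing coordinate carry the full composite distance.
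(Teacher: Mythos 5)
Your proof is correct and follows essentially the same route as the paper's: both lift $q_i$ to the composite configuration that agrees with $\Sigma(\tau)$ in every coordinate except the $i$-th, invoke the robustness of $\Sigma$, and observe that the composite Euclidean distance collapses to the single differing coordinate. The only difference is that you explicitly verify that the lifted configuration lies in $\cinv$ via a case analysis on the two defining conditions of $\cfree$, a step the paper's proof uses but leaves implicit.
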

\begin{proof}
  Fix a robot $i$, and fix some $\tau \in [0,1]$ and a 
  configuration $q_i \in \cinv_i(\tau)$.  Next, define the
  following composite configuration
$$Q = (\sigma^1(\tau), \dots, q_i, \dots , \sigma^R(\tau)).$$
Note that it differs from $\Sigma(\tau)$ only in the $i$-th robot's
configuration. By the robustness of $\Sigma$ it follows that
\begin{align*}
\delta & \leq \| \Sigma(\tau) - Q \|_2\\ 
       & = \left( \| \sigma_i(\tau) - q_i \|_2^2 + \sum_{j=1,j \neq i}^R \| \sigma_j(\tau) -
           \sigma_j(\tau) \|_2^2 \right)^{\frac{1}{2}} \\
       &\leq \| \sigma_i(\tau) - q_i \|_2.
\end{align*}
\end{proof}

The result of claim \ref{claim:robust} is that the paths
$\sigma_1, \dots, \sigma_R$ are robust in the sense that there is
sufficient clearance for the individual robots to not collide with
each other given a fixed location of a single robot.  A Lemma is 
derived using proof techniques from the 
literature~\cite{Pavone:2015fmt}, and it implies every $\graph_i$ 
contains a single-robot path $\sigma_i^{(n)}$ that converges to 
$\sigma_i$
\begin{lemma}
For every robot $i$, $\graph_i$ constructed with $n$ samples and a
connection radius $r(n)\geq \radstar$ contains a 
path $\sigma_i^{(n)}$ with the following attributes a.a.s.: 
\begin{itemize}
\item[(i)] $\sigma_i^{(n)}(0) = s_i$, $\sigma_i^{(n)}(1) = t_i$; 
\item[(ii)] $|\sigma_i^{(n)}| \leq (1 + o(1)) |\sigma_i|$; 
\item[(iii)] $\forall q \in \textup{Im}(\sigma_i^{(n)})$, $\exists 
\tau \in [0,1]$ s.t. $\|q-\sigma_i(\tau)\|_2 \leq \radstar$.
\end{itemize}
\label{lem:prm}
\end{lemma}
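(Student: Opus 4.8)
The plan is to reduce the multi-robot Lemma~\ref{lem:prm} to a statement about a single-robot \prmstar\ roadmap, so that the existing analysis of \cite{Pavone:2015fmt} (specifically the technique behind their Theorem~4.1, which established \radstar) can be invoked essentially verbatim. The key observation that makes this reduction valid is Claim~\ref{claim:robust}: it guarantees that each single-robot path $\sigma_i$ has clearance at least $\delta$ from the \emph{parameterized} forbidden region $\cinv_i(\tau)$, and in particular from the static obstacle region $\cinv_i$ (taking $j$-robot interactions aside, $\cinv_i \subseteq \cinv_i(\tau)$ for every $\tau$). Thus $\sigma_i$ is a robust path in $\cfree_i$ in the single-robot sense, with a fixed clearance $\delta > 0$. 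This is precisely the hypothesis under which the single-robot optimality machinery applies.

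With robustness in hand, I would proceed as follows. First, I would fix the robot $i$ and discard the other robots entirely, viewing $\graph_i$ as a standard \prmstar\ roadmap built with $n$ samples and radius $r(n) \geq \radstar$ over $\cfree_i \subseteq [0,1]^d$. Second, I would recall the standard covering/tiling argument: partition (a neighborhood of) $\sigma_i$ into $M = M(n)$ overlapping balls of radius proportional to $r(n)$ centered at points $x_0 = s_i, x_1, \dots, x_M = t_i$ spaced along the curve, chosen so that consecutive centers are within distance $r(n)/(1+\eta')$ for a suitable slack. Because $\sigma_i$ has clearance $\delta$ and $r(n) \to 0$, for $n$ large enough every such ball lies entirely in $\cfree_i$, so no sample falling in these balls is rejected by obstacle checking. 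Third, I would show that a.a.s.\ each ball contains at least one roadmap vertex; this is the content of the \cite{Pavone:2015fmt} argument, which bounds the probability that some ball is empty by a union bound over the $M$ balls and shows it tends to $0$ given the logarithmic factor in \radstar. Fourth, I would chain these vertices into a roadmap path $\sigma_i^{(n)}$: consecutive chosen vertices are within distance $r(n)$ (by the triangle inequality and the spacing choice) hence connected by an edge in $\graph_i$, and their connecting segments lie in $\cfree_i$ by the clearance argument, establishing that $\sigma_i^{(n)}$ is a valid path in $\graph_i$.

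It then remains to verify the three claimed attributes. Attribute~(i) is handled by adjoining $s_i$ and $t_i$ as roadmap vertices (or by noting the first and last balls are centered at the endpoints); attribute~(iii) follows directly from the tiling construction, since every vertex chosen lies in a ball centered on a point of $\sigma_i$ of radius at most $\radstar$, so each point of $\sigma_i^{(n)}$ is within $\radstar$ of the curve. Attribute~(ii), the near-optimality of the length, is the quantitatively delicate part: I would bound $|\sigma_i^{(n)}|$ by summing the lengths of the connecting edges, each of which is at most the arc length of the corresponding piece of $\sigma_i$ plus an $O(r(n))$ deviation from recentering the endpoints onto nearby vertices. Summing over $M \approx |\sigma_i|/r(n)$ segments gives a total overhead of $O(M \cdot r(n)) = O(1)$ per unit length, which one must sharpen to $(1 + o(1))|\sigma_i|$; this is achieved exactly as in \cite{Pavone:2015fmt} by choosing the ball radius to shrink slightly faster than $r(n)$ (absorbing it into the $\eta$ slack in \radstar), so the accumulated error is $o(1) \cdot |\sigma_i|$.

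The main obstacle is attribute~(ii): controlling the accumulated length overhead so that it is asymptotically negligible rather than merely bounded. The naive chaining of vertices inflates each segment by a constant factor, which would only give $|\sigma_i^{(n)}| \leq C|\sigma_i|$; obtaining the sharp $(1+o(1))$ factor requires the careful radius bookkeeping and slack-parameter tuning that constitutes the technical heart of the \cite{Pavone:2015fmt} argument, and one must confirm that the $\delta$-clearance from Claim~\ref{claim:robust} is uniform enough (it is, since $\delta$ is a single fixed constant independent of $\tau$ and $i$) that the same $n_0$ beyond which the construction succeeds works simultaneously for the covering, the emptiness bound, and the length estimate. Since all of this is single-robot reasoning applied after the reduction, the role of the multi-robot setting is confined entirely to supplying the clearance $\delta$ via Claim~\ref{claim:robust}, and I would emphasize this separation to keep the proof modular.
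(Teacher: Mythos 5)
Your proposal is correct and takes essentially the same approach as the paper: the paper's proof likewise treats attribute (i) as immediate from $s_i,t_i$ being added as vertices of $\graph_i$, and defers (ii)--(iii) to the proof technique of Theorem~4.1 in \cite{Pavone:2015fmt}, with Claim~\ref{claim:robust} supplying the fixed single-robot clearance $\delta$ that makes that machinery applicable. The only difference is expository --- you unpack the ball-covering, union-bound, and length-bookkeeping internals of the cited argument, whereas the paper simply cites it.
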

\begin{proof}
  The first property (i) follows from the fact that $s_i,t_i$
  are directly added to $\graph_i$. The rest follows from the proof of
  Theorem~4.1 in~\cite{Pavone:2015fmt}, which is applicable here since
  $r(n)\geq \radstar$.
\end{proof}

Lemma~\ref{lem:prm} also implies that $\mmgimp$ contains a path in
$\cspace$, that represents robot-to-obstacle collision-free motions,
and minimizes the multi-robot metric cost. In particular, define
$\Sigma^{(n)}=(\sigma_1^{(n)},\ldots, \sigma_R^{(n)})$, where
$\sigma_i^{(n)}$ are obtained from Lemma~\ref{lem:prm}. Then
$$|\Sigma^{(n)}|=\sum_{i=1}^R|\sigma_i^{(n)}|\leq  (1 +
o(1))\sum_{i=1}^R|\sigma_i|\leq (1+o(1))|\Sigma|.$$
However, it is not clear whether this ensures the existence of a path
where robot-robot collisions are avoided.  That is, although
$\textup{Im}(\sigma^{(n)}_i)\subset \cfree_i$, it might be the case
that $\textup{Im}(\Sigma^{(n)})\cap \cinv \neq \emptyset$.  Next it is
shown that $\sigma_1^{(n)},\ldots, \sigma_R^{(n)}$ can be
reparametrized to induce a composite-space path whose image is fully
contained in $\cfree$, with length equivalent to $\Sigma^{(n)}$.

For each robot $i$, denote by $V_i=(v_i^1,\ldots,v_i^{\ell_i})$ the
chain of $\graph_i$ vertices traversed by $\sigma^{(n)}_i$. For every
$v_i^j\in V_i$ assign a timestamp $\tau_i^j$ of the closest
configuration along $\sigma^i$, i.e., \vspace{-0.1in}
$$\tau_i^j=\argmin_{\tau\in [0,1]}\|v_i^j-\sigma_i(\tau)\|_2.$$
Also, define $\T_i=(\tau_i^1,\ldots,\tau_i^{\ell_i})$ and denote by
$\T$ the ordered list of $\bigcup_{i=1}^R\T_i$, according to the
timestamp values. Now, for every $i$, define a global timestamp
function $T\!S_i:\T \rightarrow V_i$ which assigns to each global 
timestamp in $\T$ a single-robot configuration from $V_i$.  It thus
specifies in which vertex robot $i$ resides at time $\tau \in \T$.
For $\tau\in \T$, let $j$ be the largest index such that
$\tau_i^j \leq \tau$. Then simply assign $T\!S_i(\tau)= \tau_i^j$. 
From property (iii) in Lemma~\ref{lem:prm} and 
Claim~\ref{claim:robust} it follows that no robot-robot collisions are 
induced by the reparametrization, concluding the proof of
Theorem~\ref{thm:opt_graph}.

\subsection{Asymptotic Optimality of $\drrtstar$}

Finally, \drrtstar\ is shown to be AO.  Denote by $m$ the time budget 
in Algorithm~\ref{algo:drrtstar},  i.e., the number of iterations of 
the loop. Denote by $\Sigma^{(n,m)}$ the solution returned by 
$\drrtstar$ for $n$ and $m$.

\begin{theorem}
  If $r(n)>\radstar$ then for every fixed $\epsilon>0$ it holds that
  $$\lim_{n,m\rightarrow \infty}\Pr\left[|\Sigma^{(n,m)}|\leq
    (1+\epsilon)c^*\right]=1.$$
  \label{thm:ao_drrt}
\end{theorem}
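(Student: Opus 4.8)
The plan is to decompose the argument into two parts that mirror the structure already set up: (i) for a \emph{fixed} number of samples $n$, show that as the iteration budget $m\to\infty$ the tree $\tree$ grown by $\drrtstar$ recovers the shortest collision-free path $\Sigma^{(n)}$ of the finite graph $\mmgimp$; and (ii) invoke Theorem~\ref{thm:opt_graph} to push $n\to\infty$ so that $|\Sigma^{(n)}|$ itself converges to the robust optimum $c^*$. Since the algorithm only ever connects vertices through collision-free local paths (each $\local$ is tested against $\cfree$) and retains the cheapest $S$--$T$ path discovered, one always has $|\Sigma^{(n,m)}|\geq |\Sigma^{(n)}|$; the work is therefore to show the reverse inequality holds in the limit. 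Theorem~\ref{thm:opt_graph} also guarantees a.a.s.\ that $\Sigma^{(n)}$ exists with finite cost, so that $T$ lies in the connected component of $S$ in the collision-free subgraph of $\mmgimp$.

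For part (i) I would first establish a coverage (probabilistic completeness) statement: for fixed $n$, every vertex of $\mmgimp$ in the connected component of $S$ within the collision-free subgraph is added to $\tree$ with probability tending to $1$ as $m\to\infty$. The key quantitative ingredient is that, from any vertex $V$ already in $\tree$ and any neighbor $V'$ of $V$ in $\mmgimp$, a single exploration iteration generates the edge $(V,V')$ with probability bounded below by a positive constant $p(n)>0$: this is exactly the Voronoi-bias property, namely that the sampling region $\textup{Vol}(V)=\textup{Vor}(V)\cap\textup{Vor}'(V)$ which causes $\oracle$ to return $V'$ from the nearest neighbor $V$ has positive measure. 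One must also check that exploration iterations (those drawing a random $\qrand$) occur infinitely often: a greedy chain toward $T$ triggered by returning $\vnew$ has length at most the number of graph vertices and hence must eventually return $NULL$, forcing an exploratory step. A geometric-trials / Borel--Cantelli argument then adds each reachable vertex a.a.s.

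For optimality \emph{within} $\mmgimp$, I would argue that the rewiring in Algorithm~\ref{algo:drrtstar_expand} behaves like an asynchronous Bellman--Ford relaxation on the finite graph. Each time a vertex is regenerated as $\vnew$ (which, by the same positive-probability argument, happens infinitely often for every reachable vertex), Line~8 re-selects its cheapest in-tree parent and Lines~16--18 relax all incident edges outward, propagating improved cost-to-come into the affected subtree. Because every edge of the collision-free subgraph is thus relaxed infinitely often and all edge costs are nonnegative, the stored cost-to-come of each vertex converges to its true shortest-path cost; as $\mmgimp$ is finite there are finitely many attainable path costs, so in fact $|\Sigma^{(n,m)}|=|\Sigma^{(n)}|$ for all sufficiently large $m$, almost surely over the algorithm's randomness.

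Combining the two parts, for fixed $n$ one obtains $\lim_{m\to\infty}\Pr[|\Sigma^{(n,m)}|\leq(1+\epsilon)c^*]=\Pr[|\Sigma^{(n)}|\leq(1+\epsilon)c^*]$, and then Theorem~\ref{thm:opt_graph} gives $\Pr[|\Sigma^{(n)}|\leq(1+\epsilon)c^*]\to 1$ as $n\to\infty$, which yields Theorem~\ref{thm:ao_drrt} (read as the iterated limit, or equivalently along any schedule $m=m(n)\to\infty$ growing fast enough relative to the exploration time of $\mmgimp$). I expect the main obstacle to be part (i): rigorously lower-bounding $p(n)$ \emph{uniformly} across iterations, since $\oracle$ is deterministic given $\qrand$ and the relevant region $\textup{Vol}(V)$ must be shown to have measure bounded away from $0$ for every edge one wishes to generate, while simultaneously handling the interaction between the greedy and exploratory branches so that infinitely many sufficiently-independent exploratory trials remain available.
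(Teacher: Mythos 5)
Your proposal is correct, and its top-level decomposition is exactly the paper's: (a) for fixed $n$ and a fixed instance of $\mmgimp$, show that $\drrtstar$ recovers the shortest collision-free path over $\mmgimp$ as $m\to\infty$; (b) invoke Theorem~\ref{thm:opt_graph} to send $n\to\infty$ (for the joint limit over $n,m$ the paper simply cites \cite[Theorem~6]{SoloveySH16:ijrr}). The difference lies in how step (a) is finished. The paper (Lemma~\ref{lem:tree_conv} in the appendix) casts the growth of $\tree$ along the single optimal path of $\mmgimp$ as an absorbing Markov chain: the states are optimal-path prefixes contained in $\tree$, the target vertex is absorbing, each forward transition occurs with probability proportional to $\mu(\textup{Vol}(V_k))/\mu(\cfree)>0$ (positivity imported from \cite[Lemma~2]{SoloveySH16:ijrr} under general position), and \cite[Theorem~11.3]{Snell2012:intro_prob} concludes. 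You instead prove coverage of every reachable vertex by geometric trials/Borel--Cantelli and then handle optimality by viewing Lines~8 and~16--18 as asynchronous Bellman--Ford relaxations on a finite graph. Both routes rest on the identical probabilistic ingredient---positive measure of $\textup{Vol}(V)$---and your Borel--Cantelli step is the absorbing-chain theorem in different clothing. What each buys: the paper's state definition builds optimality into containment of the optimal prefix (Line~8's best-parent selection makes the inductive step go through), so it needs neither a separate cost-propagation argument nor coverage of the whole graph, only positive transition probability along one path; your relaxation argument is heavier but yields a stronger conclusion (every vertex's cost-to-come converges, with exact equality after finitely many iterations a.s.) and makes explicit two details the paper glosses over, namely that greedy chains terminate so exploratory iterations recur infinitely often, and that rewiring actually propagates improved costs. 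Finally, the obstacle you flag---uniformly lower-bounding $p(n)$ across iterations---dissolves on inspection: the vertex set of $\tree$ is always a subset of the finite set $\mmnodes$, so $\textup{Vor}(V)$ taken with respect to the tree contains $\textup{Vor}(V)$ taken with respect to all of $\mmnodes$, and the minimum of $\mu(\textup{Vol}(V))$ over the finitely many graph edges gives an iteration-independent positive bound; this is precisely the fact the paper imports from \cite[Lemma~2]{SoloveySH16:ijrr}.
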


Since $\mmgimp$ is AO (Theorem~\ref{thm:opt_graph}), it suffices to
show that for any fixed $n$, and a fixed instance of $\mmgimp$,
defined over $R$ PRMs with $n$ samples each, $\drrtstar$ eventually 
(as $m$ tends to infinity), finds the optimal trajectory over 
$\mmgimp$.  This can be shown using the properties of a Markov chain 
with absorbing states \cite[Theorem~11.3]{Snell2012:intro_prob}.
While a full proof is omitted here, the high-level idea is similar to
what is presented in previous work \cite[Theorem~3]{SoloveySH16:ijrr},
and expanded upon in Appendix~A.
By restricting the states of the Markov chain to being the graph
vertices along the optimal path, setting the target vertex to be an
absorbing vertex, and showing that the probability of transitioning
along any edge in this path is nonzero (i.e. the probability is
proportional to $\frac{\mu(\textup{Vol}(V_{k}))}{\mu(\cfree)} > 0$), 
then the probability that this process does not reach the target state 
along the optimal path converges to $0$ as the number of \drrtstar\
iterations tends to infinity.  The final step is to show that the
above statements hold when both $m$ and $n$ tend to $\infty$. A proof
for this phenomenon can be found
in~\cite[Theorem~6]{SoloveySH16:ijrr}.


\section{Experimental Validation}
\label{sec:experiments}

This section provides an experimental evaluation of \drrtstar\ by
demonstrating practical convergence, scalability, and applicability to
dual-arm manipulation. The approach and alternatives are executed on a
cluster with Intel(R) Xeon(R) CPU E5-4650 @ 2.70GHz processors, and
128GB of RAM.  \footnote{Additional data are provided in Appendix
\ref{apx:Experiments}.}

\begin{wrapfigure}{r}{0.255\textwidth}
  \centering
  \includegraphics[width=0.250\textwidth]{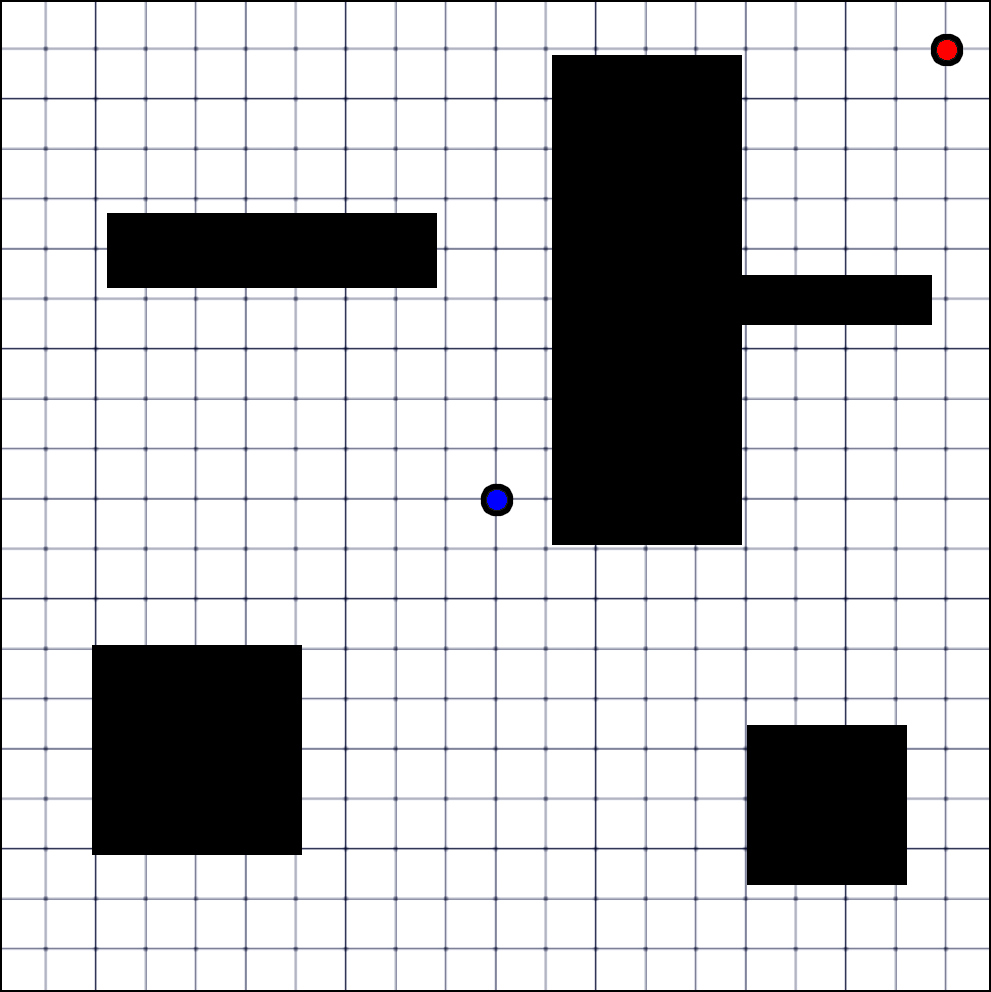}
  \caption{The 2D environment where the 2 disk robots 
    operate.}
  \label{fig:poly_enviro}
\end{wrapfigure}

\noindent \textbf{2 Disk Robots among 2D Polygons:} This
base-case test involves $ 2 $ disks ($\cspace_i := \reals^2$) of
radius $0.2$, in a $10.2 \times 10.2$ region, as in
Figure~\ref{fig:poly_enviro}. The disks have to swap positions between
$(0,0)$ and $(9,9)$. This is a setup where it is possible to compute
the explicit roadmap, which is not practical in more involved
scenarios. In particular, \drrtstar\ is tested against: a) running
$\astar$ on the implicit tensor roadmap $\mmgimp$ (referred to as
``Implicit \astar'') defined over the same individual roadmaps with
$N$ nodes each as those used by \drrtstar; and b) an explicitly
constructed \prmstar\ roadmap with $N^2$ nodes in the composite space.

Results are shown in Figure~\ref{fig:polygonal_benchmark}.  \drrtstar\
converges to the optimal path over $\mmgimp$, similar to the one
discovered by Implicit \astar, while quickly finding an initial
solution of high quality. Furthermore, the implicit tensor product
roadmap $\mmgimp$ is of comparable quality to the explicitly
constructed roadmap.

\begin{figure}[h]
\centering
\includegraphics[width=0.48\textwidth]{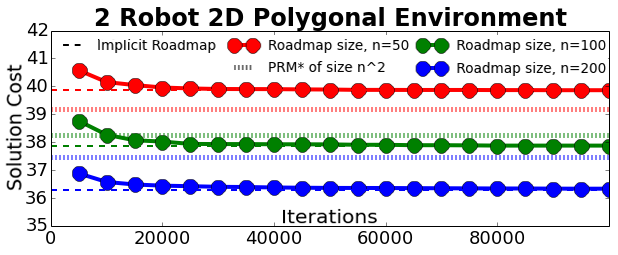}
\caption{ Average solution cost over iterations. Data averaged over 
$10$ roadmap pairs.  $\drrtstar$ (solid line) converges to the optimal 
path through $\mmgimp$ (dashed line).}
\label{fig:polygonal_benchmark}
\end{figure}

Table~\ref{tab:2_robot} presents running times.
\drrtstar\ and implicit \astar\ construct $2$ $N$-sized roadmaps 
(row~3), which are faster to construct than the \prmstar\ roadmap in
$\cspace$ (row~1).  \prmstar\ becomes very costly as $N$ increases.
For $N=500$, the explicit roadmap contains $250,000$ vertices, taking
$1.7$GB of RAM to store, which was the upper limit for the machine
used. When the roadmap can be constructed, it is quicker to query
(row~2). \drrtstar\ quickly returns an initial solution (row~5), and
converges within $5\%$ of the optimum length (row~6) well before
Implicit \astar\ returns a solution as $N$ increases (row~4). The next
benchmark further emphasizes this point.

\begin{table}[h]
\centering
\caption{Construction and query times (SECs) for 2 disk robots.}
\label{tab:2_robot}
\small
\begin{tabular}{|l|c|c|c|}
\hline
\multicolumn{1}{|r|}{\textbf{Number of nodes: $ N $ =}} & \textbf{50} & \textbf{100} & \textbf{200} \\ \hline
{$N^2$-PRM* construction}                     & 3.427        & 13.293        & 69.551        \\ \hline
{$N^2$-PRM* query}                            & 0.002       & 0.004        & 0.023        \\ \hline
{2 $N$-size PRM* construction}              & 0.1351        & 0.274        & 0.558       \\ \hline
{Implicit A* search over $\mmgimp$}                     & 0.684       & 2.497        & 10.184        \\ \hline
{\drrtstar\ over $\mmgimp$ (initial) }                    & 0.343       & 0.257        & 0.358        \\ \hline
{\drrtstar\ over $\mmgimp$ (converged) }                    & 3.497       & 4.418        & 5.429        \\ \hline
\end{tabular}
\end{table}

\textbf{Many Disk Robots among 2D Polygons:} In the same environment as
above, the number of robots $R$ is increased to evaluate scalability.
Each robot starts on the perimeter of the environment and is tasked
with reaching the opposite side. An $N=50$ roadmap is constructed for
every robot. It quickly becomes intractable to construct a \prmstar\
roadmap in the composite space of many robots.

\begin{figure}[h]
\centering
\includegraphics[width=0.475\textwidth]{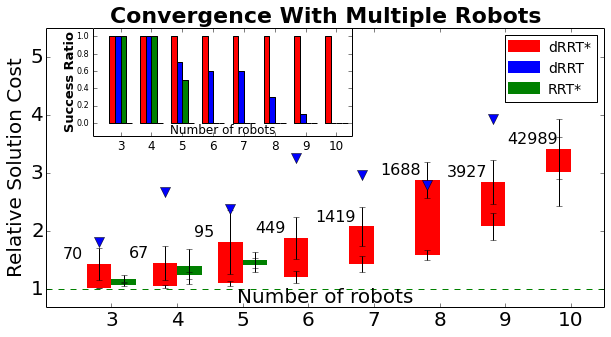}
\includegraphics[width=0.475\textwidth]{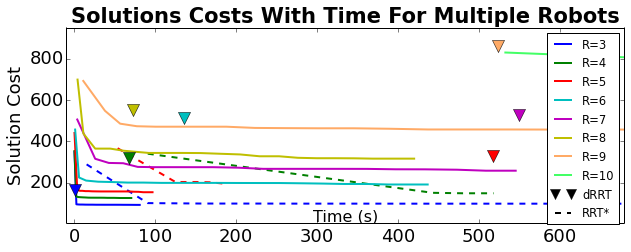}
\caption{  Data  averaged over $10$ runs.
(\textit{Top}): Relative solution cost and success ratio
of \drrtstar, \drrt\ and \rrtstar\ for increasing $R$. \drrtstar:
average iteration and variance for initial solution (top of box), and
solution cost and variance after $100,000$ iterations
(bottom). Similar results for \rrtstar.  Single data point for \drrt\
(no quality improvement after first solution). (\textit{Bottom}):
Solution costs over time.} 
\label{fig:scalability}
\end{figure}

Figure~\ref{fig:scalability} shows the inability of alternatives to
compete with \drrtstar\ in scalability. Solution costs are normalized
by an optimistic estimate of the path cost for each case, which is the
sum of the optimal solutions for each robot, disregarding robot-robot
interactions.  Implicit \astar\ fails to return solutions even for 3
robots. Directly executing \rrtstar\ in the composite space fails to
do so for $R \geq 6$.  The original \drrt\ method (without the
informed search component) starts suffering in success ratio for
$R \geq 5$ and returns worse solutions than \drrtstar.  The average
solution times for \drrt\ may decrease as $R$ increases but this is
due to the decreasing success ratio, i.e., \drrt\ begins to only
succeed at easy problems.

\begin{figure}[h]
\includegraphics[height=1.25in]{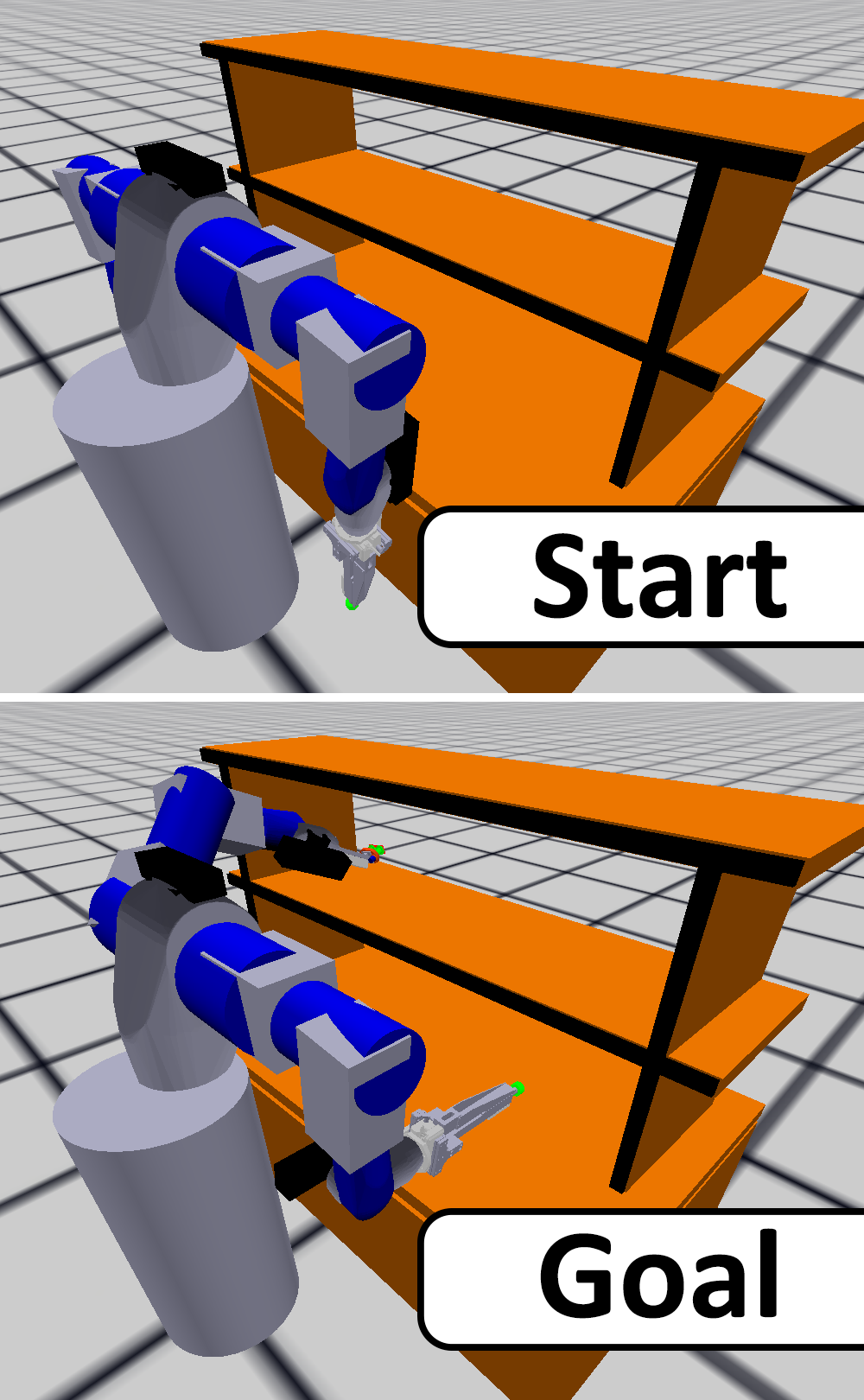}
\includegraphics[height=1.25in]{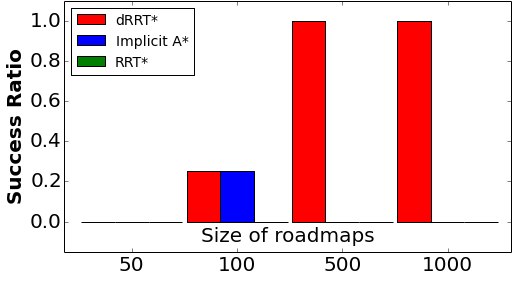}
\centering
\includegraphics[width=0.48\textwidth]{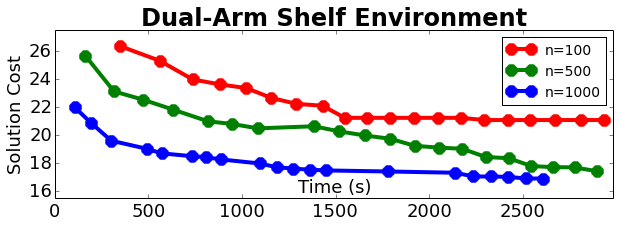}
\caption{
(\textit{Top}): \drrtstar\ is run for a dual-arm 
manipulator to go from its home position (above) to a reaching 
configuration (below) and achieves perfect success ratio as $n$ 
increases.
(\textit{Bottom}): \drrtstar\ solution quality over time.  Here,
larger roadmaps provide benefits in terms of running time and solution
quality.}
\label{fig:motoman_convergence}
\end{figure}

\textbf{Dual-arm manipulator:} This test shows the benefits
of \drrtstar\ when planing for two $7$-dimensional arms.
Figure~\ref{fig:motoman_convergence} shows that
\rrtstar\ fails to return solutions within $100K$
iterations. Using small roadmaps is also insufficient for this
problem.  Both \drrtstar\ and Implicit \astar\ require larger roadmaps
to begin succeeding. But with $N \geq 500$, Implicit \astar\ always
fails, while \drrtstar\ maintains a $100\%$ success ratio. As
expected, roadmaps of increasing size result in higher quality
path. The informed nature of \drrtstar\ also allows to find initial
solutions fast, which together with the branch-and-bound primitive
allows for good convergence.


\section{Discussion}
\label{sec:discuss}

This work studies the asymptotic optimality of sampling-based
multi-robot planning over implicit structures.  The objective is to
efficiently search the composite space of multi-robot systems while
also achieving formal guarantees. This requires construction of
asymptotically optimal individual robot roadmaps and appropriately
searching their tensor product. Performance is further improved by
informed search in the composite space.

These results may extend to more complex settings involving
kinodynamic constraints by relying on recent
work~\cite{ELM15a,ELM15b}.  Furthermore, the analysis may be valid for
different cost functions other than total distance. The tools
presented here can prove useful in the context of simultaneous task
and motion planning for multiple robots, especially
manipulators \cite{Dobson:2015_MAM}.




{\small
\bibliographystyle{abbrv}
\bibliography{manip}}

\appendix
\subsection{Proof of Convergence}
\label{apx:conv_proof}

This appendix examines the result of Theorem~\ref{thm:ao_drrt}, and
formally proves the convergence of the \drrtstar\ tree toward
containing all optimal paths.

\begin{lemma}[Optimal Tree Convergence of \drrtstar]
\label{lem:tree_conv}
Consider an arbitrary optimal path $\pi^*$ originating from $v_0$ and 
ending at $v_{t}$, then let $O^{(m)}_k$ be the event such that after 
$m$ iterations of \drrtstar, the search tree $\tree$ contains the 
optimal path up to segment $k$.  Then, $$ \liminf_{m \to \infty} \pr 
\big( O^{(m)}_t \big) = 1.$$
\end{lemma}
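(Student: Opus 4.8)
The plan is to prove the statement by induction on the segment index $k$, exploiting the fact that $O^{(m)}_k$ is an \emph{increasing} event in $m$: once $\tree$ contains the optimal subpath $v_0,\ldots,v_k$ together with its (minimal) cost-to-come, the parent-selection and rewiring rules of Algorithm~\ref{algo:drrtstar_expand} can never remove it, since no cheaper route to these vertices exists along $\pi^*$. Write $\pi^*=(v_0,\ldots,v_t)$ for the chain of $\mmgimp$-vertices, where $t$ is finite for a fixed instance of $\mmgimp$. Because the event is monotone, $\lim_{m\to\infty}\pr(O^{(m)}_t)$ exists and equals $\liminf_{m\to\infty}\pr(O^{(m)}_t)$, so it suffices to show that for each $k$, conditioned on $v_k$ having been incorporated into $\tree$ along $\pi^*$ at some finite iteration, the tree a.a.s.\ eventually also incorporates the edge $(v_k,v_{k+1})$. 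Chaining this over the finitely many segments then yields the claim.

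For the base case $k=0$ the statement is immediate, as $v_0=S$ is inserted at initialization (Algorithm~\ref{algo:drrtstar}, line~1). For the inductive step I would invoke the Voronoi-bias property illustrated in Figure~\ref{fig:voronoi}: whenever an \emph{exploration} iteration draws $\qrand$ inside $\textup{Vol}(v_k)=\textup{Vor}(v_k)\cap\textup{Vor}'(v_k)$, the nearest-neighbor selection returns $v_k$ as $\vnear$ and the oracle $\oracle(\vnear,\qrand)$ produces $\vnew=v_{k+1}$. Hence a single exploration iteration generates $v_{k+1}$ via $v_k$ with probability at least
$$p_k:=\frac{\mu\big(\textup{Vol}(v_k)\big)}{\mu(\cfree)}>0,$$
where $\mu$ is the Lebesgue measure and positivity holds because $\textup{Vol}(v_k)$ has nonempty interior.

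To turn this per-iteration bound into an a.a.s.\ guarantee, I would first argue that exploration iterations occur infinitely often. A greedy iteration returns a node only when $h(\vnew)<h(\vparent)$ strictly decreases the heuristic; since for fixed $n$ the graph $\mmgimp$ has finitely many vertices and hence finitely many heuristic values, any maximal run of consecutive greedy steps is finite and must be followed by an exploration step. Conditioning on the (a.a.s.\ finite, by the induction hypothesis) first iteration at which $v_k$ belongs to $\tree$, the subsequent exploration iterations each realize the transition to $v_{k+1}$ with probability at least $p_k$, so the probability that none ever does so is $\lim_{\ell\to\infty}(1-p_k)^{\ell}=0$ by a Borel--Cantelli / geometric argument. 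Once $v_{k+1}$ is generated via $v_k$, the parent selection (line~8) and rewiring (lines~14--18) install $v_k$ as the parent of $v_{k+1}$, because the cost-to-come along $\pi^*$ is optimal and no admissible neighbour offers a cheaper connection; thus the optimal subpath extends to segment $k+1$. This is exactly the finite absorbing Markov-chain structure underlying Theorem~\ref{thm:ao_drrt}: states $0,\ldots,t$ record progress along $\pi^*$, state $t$ is absorbing, and every forward transition probability is bounded below by $p_k>0$, so the chain reaches $t$ a.a.s.\ \cite[Theorem~11.3]{Snell2012:intro_prob}.

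The main obstacle is the coupling between successive iterations: the tree is history-dependent, so the per-iteration transition events are not literally independent. I resolve this by conditioning on the first hitting time of $v_k$ and lower-bounding each subsequent exploration step's success probability by the fixed constant $p_k$, which holds uniformly regardless of the current tree and decouples the analysis enough for the geometric bound. A secondary subtlety is ensuring that the greedy mode cannot starve exploration indefinitely and that rewiring never routes $v_{k+1}$ through a spurious cheaper parent; these are controlled, respectively, by the finiteness of $\mmgimp$ and by the optimality of $\pi^*$.
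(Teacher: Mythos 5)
Your proposal is correct and takes essentially the same approach as the paper's proof: both cast progress along $\pi^*$ as an absorbing Markov chain on its vertices, lower-bound each forward transition probability by $\mu(\textup{Vol}(v_k))/\mu(\cfree)>0$ via the Voronoi-bias property, and conclude absorption using \cite[Theorem~11.3]{Snell2012:intro_prob}. Your additional steps---the monotonicity of $O^{(m)}_k$ under rewiring, the finiteness of greedy runs ensuring exploration iterations recur, and the conditioning argument handling history dependence---make explicit some details the paper's proof glosses over, but do not change the underlying argument.
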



{\bf Proof.} This property will be proven using a theorem from Markov
chain literature \cite[Theorem~11.3]{Snell2012:intro_prob}. 
Specifically, the properties of absorbing Markov chains can be 
exploited to show that $\drrtstar$ will eventually contain the optimal
path over $\mmgimp$ for a given query.  An absorbing Markov chain is
one such that there is some subset of states in which the transition
matrix only allows that state to transition to itself.

The proof follows by showing that the $\drrtstar$ method can be 
described as an absorbing Markov chain, where the target state of a
query is represented as an absorbing state in a Markov chain, 
re-stated here.

\begin{theorem}[Thm 11.3 in Grinstead \& Snell]
\label{thm:grinstead}
In an absorbing Markov chain, the probability that the process will be 
absorbed is 1 (i.e., $Q(m) \to 0$ as $n \to \infty$), where $Q(m)$ is
the transition submatrix for all non-absorbing states.
\end{theorem}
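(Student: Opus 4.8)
The plan is to reduce the statement to the elementary fact that the transient-to-transient block of the one-step transition matrix has row sums that decay geometrically. First I would order the states so that the $t$ transient (non-absorbing) states come first and the absorbing states last, putting the transition matrix in canonical form
$$P = \begin{pmatrix} Q & R \\ 0 & I \end{pmatrix},$$
where $Q:=Q(1)$ is the $t\times t$ transient-to-transient submatrix, $R$ collects the transient-to-absorbing probabilities, and the lower blocks encode that absorbing states are trapping. An easy induction then shows that $P^m$ is block upper-triangular with top-left block exactly $Q^m$, so the paper's $Q(m)$ equals $Q^m$, and $(Q^m)_{ij}$ is precisely the probability of being in transient state $j$ at time $m$ having started in transient state $i$ and never having been absorbed. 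Thus proving that the process is absorbed with probability $1$ is \emph{equivalent} to proving $Q(m)=Q^m \to 0$ entrywise, which is the statement to establish.

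Next I would exploit the defining property of an absorbing chain: from every transient state some absorbing state is reachable with positive probability in finitely many steps. For each transient state $i$ I pick an integer $n_i$ and a number $\varepsilon_i>0$ so that the chain started at $i$ is absorbed within $n_i$ steps with probability at least $\varepsilon_i$. Because the state space is finite, I may set $N=\max_i n_i<\infty$ and $\varepsilon=\min_i \varepsilon_i>0$; since absorbing states are trapping, being absorbed within $n_i\le N$ steps implies being absorbed within $N$ steps, so from every transient state the chance of remaining transient for $N$ consecutive steps is at most $1-\varepsilon<1$. In matrix terms this says every row sum of $Q^N$ is at most $1-\varepsilon$, i.e. $\|Q^N\|_\infty \le 1-\varepsilon$ in the maximum-row-sum (operator $\infty$-) norm.

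Finally I would conclude by submultiplicativity. Since $\|Q\|_\infty\le 1$ and $\|\cdot\|_\infty$ is submultiplicative, $\|Q^{kN}\|_\infty \le \|Q^N\|_\infty^{\,k} \le (1-\varepsilon)^k \to 0$ as $k\to\infty$; moreover $\|Q^m\|_\infty$ is nonincreasing in $m$, so convergence of this subsequence forces $\|Q^m\|_\infty\to 0$, hence $Q(m)=Q^m\to 0$ and the absorption probability is $1$. I expect the only real subtlety — the main obstacle in an otherwise routine argument — to be the passage to the \emph{uniform} bound in the second step: one must use finiteness of the state space to turn the pointwise positivity of absorption probabilities into a single horizon $N$ and a single $\varepsilon>0$ valid for \emph{all} transient starting states, and one must justify that extending each individual horizon to the common value $N$ does not destroy the lower bound, which is exactly where the trapping property of the absorbing states is invoked.
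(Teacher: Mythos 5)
Your proof is correct, and there is nothing substantive to flag: the paper itself does not prove this statement at all --- it is imported verbatim as Theorem~\ref{thm:grinstead} from Grinstead and Snell's textbook and invoked as a black box inside the proof of Lemma~\ref{lem:tree_conv}. Your argument (canonical block form giving $Q(m)=Q^m$, finiteness of the state space yielding a uniform horizon $N$ and margin $\varepsilon>0$ with every row sum of $Q^N$ at most $1-\varepsilon$, then geometric decay by submultiplicativity together with monotonicity of $\|Q^m\|_\infty$) is essentially the textbook's own classical proof, so it matches the only proof this result has in the paper's context.
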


There are two steps to using this proof.  First, that the $\drrtstar$
search can be cast as an absorbing Markov chain, and second, that the
probability of transition for from each state to the next in this 
chain is nonzero (i.e. that each state can eventually be connected to
the target).

For query $(S, T)$, let the sequence $V = \{ v_1, v_2, \dots, 
v_{\textup{t}}\}$ of length $t$ represent the vertices of $\mmgimp$ 
corresponding to the optimal path through the graph which connects 
these points, where $v_{\textup{t}}$ corresponds to the target vertex,
and furthermore, let $v_{\textup{t}}$ be an absorbing state.  
Theorem~\ref{thm:grinstead} operates under the assumption that each 
vertex $v_{\textup{i}}$ is connected to an absorbing state 
($v_{\textup{t}}$ in this case).

Then, let the transition probability for each state have two values, 
one for each state transitioning to itself, which corresponds to the
$\drrtstar$ search expanding along some other arbitrary path.  The 
other value is a transition probability from $v_{\textup{i}}$ to 
$v_{\textup{i}+1}$.  This corresponds to the method sampling within
the volume $\textup{Vol}(v_{\textup{i}})$.

Then, as the second step, it must be shown that this volume has a 
positive probability of being sampled in each iteration.  It is 
sufficient then to argue that $\frac{\mu(\textup{Vol}
(s_{\textup{i}}))} {\mu(\cfree)} > 0$.  Fortunately, for any finite 
$n$, previous work has already shown that this is the case given 
general position assumptions \cite[Lemma~2]{SoloveySH16:ijrr}.

Given these results, the $\drrtstar$ is cast as an absorbing Markov
chain which satisfies the assumptions of \ref{thm:grinstead}, and 
therefore, the matrix $Q(m) \to 0$.  This implies that the optimal
path to the goal has been expanded in the tree, and therefore 
$ \liminf_{m \to \infty} \pr \big( O^{(m)}_t \big) = 1.$ \qed

\subsection{More Experimental Data}
\label{apx:Experiments}

This appendix presents additional experimental data omitted from 
Section~\ref{sec:experiments}.

\subsubsection{2-Robot Benchmark}

\begin{figure}[ht]
    \centering
    \includegraphics[width=0.43\textwidth]{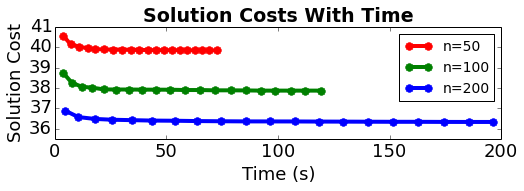}
    \caption{2-Robot convergence data over time.}
    \label{fig:2r_time}
\end{figure}

For the two-robot benchmark, additional data is presented in 
Figure~\ref{fig:2r_time}.  Here, the data presented in 
Figure~\ref{fig:polygonal_benchmark} is shown again over time instead 
of over iterations.

\begin{figure}[ht]
    \centering
    \includegraphics[width=0.43\textwidth]{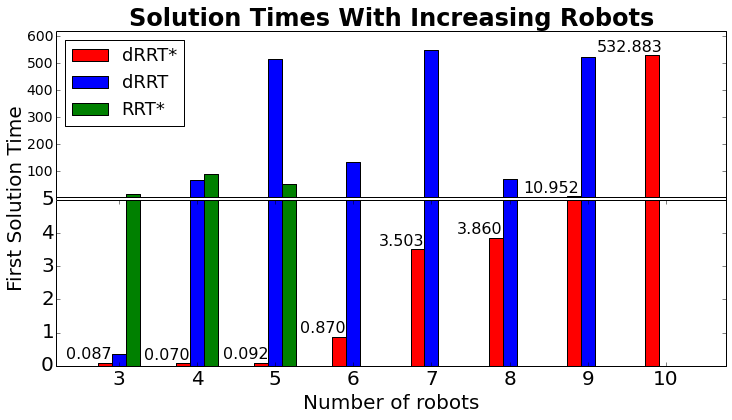}
    \caption{$R$-robot solution times for varying $R$.}
    \label{fig:rr_time}
\end{figure}

\subsubsection{R-Robot Benchmark}

For the $R$-robot benchmark, additional data is presented in 
Figure~\ref{fig:rr_time}, showing query resolution times for the 
various methods.

\begin{figure}[ht]
    \centering
    \includegraphics[width=0.43\textwidth]{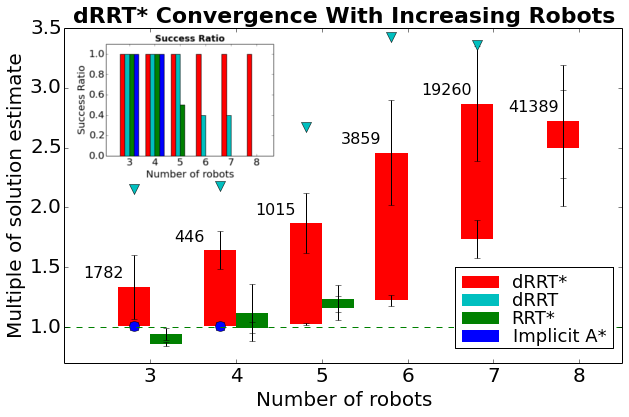}
    \includegraphics[width=0.43\textwidth]{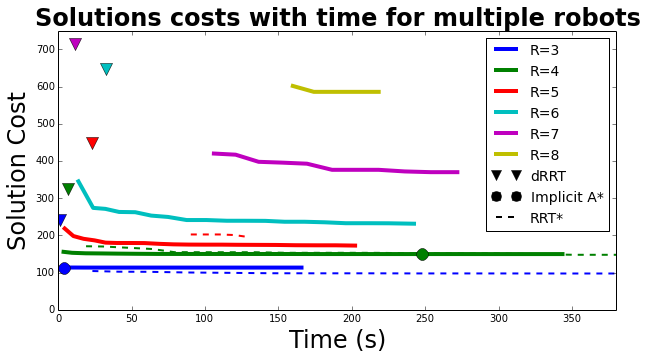}
    \caption{(\textit{Top}): Convergence rate and success ratio over 
    the minimal $9$-node roadmap (\textit{Bottom}): Solution cost over
    time when using the minimal roadmap.}
    \label{fig:synth_results}
\end{figure}

To emphasize the lack of scalability for alternate methods, additional
experiments were run in this setup using a minimal roadmap.  The tests
use a $9$-node roadmap for each robot as illustrated in 
Figure~\ref{fig:nine_grid}.  Each roadmap is constructed with slight
perturbations to the nodes within the shaded regions indicated in the
figure.

\begin{wrapfigure}{r}{0.168\textwidth}
  \centering
  \includegraphics[width=0.166\textwidth]{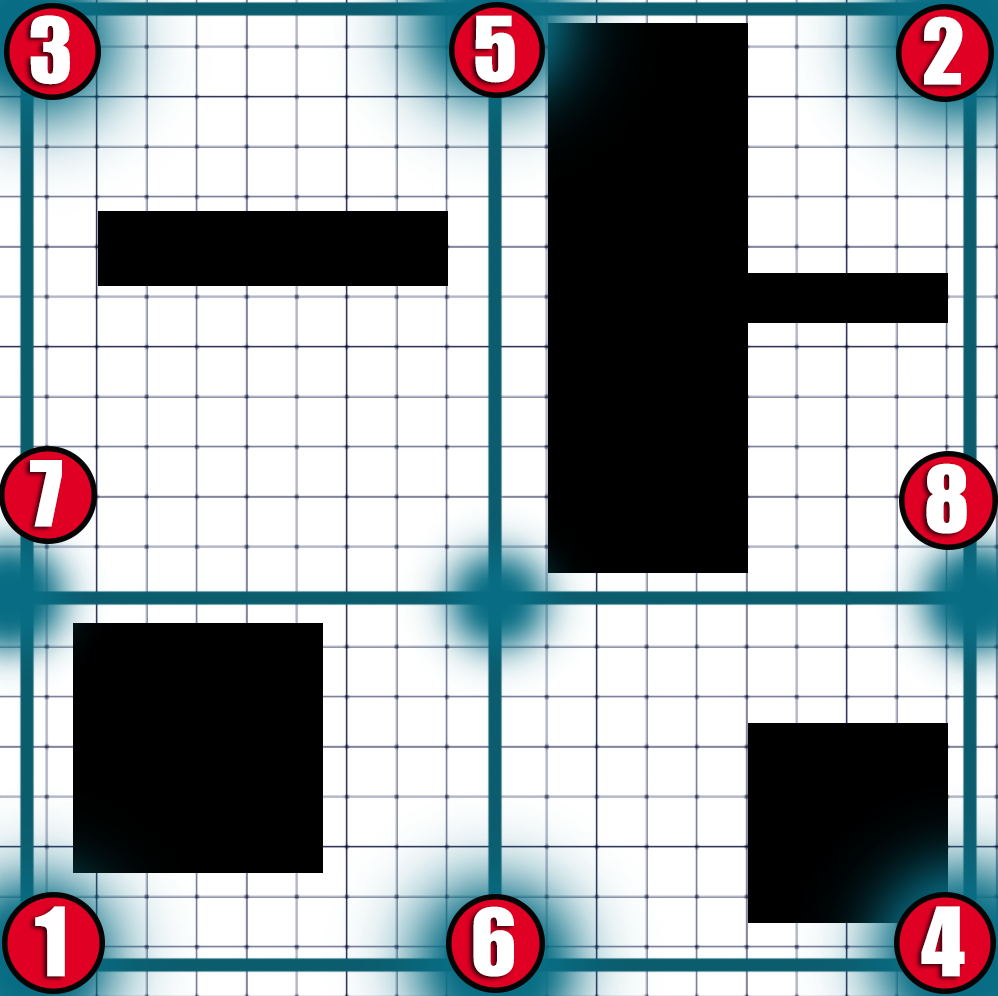}
  \caption{Minimal graph for the $R$-robot case.}
  \label{fig:nine_grid}
\end{wrapfigure}

The data for this modified benchmark (shown in 
Figure~\ref{fig:synth_results}) indicate that even using a very small
roadmap does not allow alternate methods to scale.  While the methods
scale better, Implict \astar\ does time out for $R \geq 5$, and 
\rrtstar\ times out for $R \geq 6$.

\subsubsection{Manipulator Benchmark}

For the dual-arm manipulator benchmark, additional data is presented 
in Figure~\ref{fig:moto_iter}.  Here, the data of 
Figure~\ref{fig:motoman_convergence} is shown over iterations 
instead of over time.

\begin{figure}[h]
    \centering
    \includegraphics[width=0.43\textwidth]{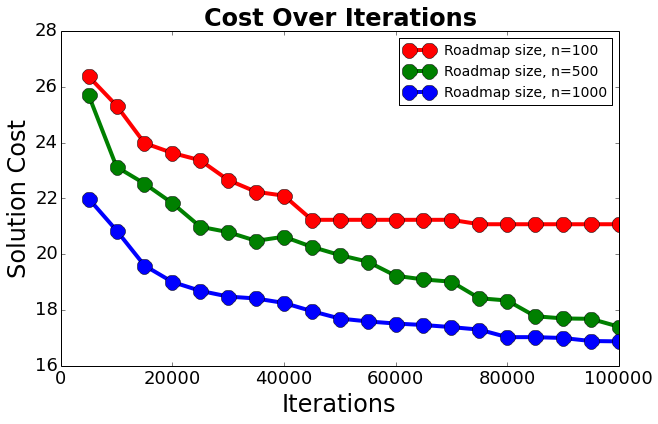}
    \caption{Motoman benchmark solution quality over iterations.}
    \label{fig:moto_iter}
\end{figure}

\end{document}